\documentclass[12pt]{iopart}

\expandafter\let\csname equation*\endcsname\relax
\expandafter\let\csname endequation*\endcsname\relax
\usepackage{amssymb}
\usepackage{amsmath}
\usepackage[normalem]{ulem}
\usepackage{mathrsfs,amsbsy,color,xcolor,graphicx,bm,amsthm,amsfonts}%
\usepackage{wasysym,units,wrapfig}
\usepackage{algpseudocode,algorithm}
\usepackage{subfigure}
\usepackage{dsfont}
\bibliographystyle{iopart-num}


\begin{document}

\newtheorem{theorem}{Theorem}
\newtheorem{proposition}[theorem]{Proposition}
\newtheorem{corollary}[theorem]{Corollary}

\title[Quantum circuit design for accurate simulation of qudit channels]{Quantum circuit design for accurate \\
simulation of qudit channels}

\author{Dong-Sheng Wang$^1$ and Barry C. Sanders$^{1,2}$}

\address{$^1$Institute for Quantum Science and Technology, University of Calgary, Alberta T2N 1N4, Canada}
\address{$^2$Program in Quantum Information Science,
	Canadian Institute for Advanced Research, Toronto, Ontario M5G 1Z8, Canada}
\ead{dongshwa@ucalgary.ca}
\ead{sandersb@ucalgary.ca}

\vspace{10pt}

\begin{abstract}
    We construct a classical algorithm that designs quantum circuits
	for algorithmic quantum simulation of arbitrary qudit channels
    on fault-tolerant quantum computers
	within a pre-specified error tolerance with respect to diamond-norm distance.
	The classical algorithm is constructed by decomposing a quantum channel
	into a convex combination of generalized extreme channels by optimization of a set of
	nonlinear coupled algebra\"{i}c equations.
	The resultant circuit is a randomly chosen generalized extreme channel circuit
	whose run-time is logarithmic with respect to the error tolerance
	and quadratic with respect to Hilbert space dimension,
	which requires only a single ancillary qudit plus classical dits.
\end{abstract}
\pacs{03.67.Ac, 03.65.Yz, 02.40.Ft}

%
%
%
%

\section{Introduction}
\label{sec:intr}

Algorithmic quantum simulation~\cite{Fey82,Llo96,BACS07a,BN09},
which is digital quantum simulation with pre-specified bounded-error output~\cite{Sand13},
is important for simulating many-body dynamics~\cite{WBHS11,RWS12},
quantum-state generation and
dissipative quantum-state engineering~\cite{VWC09,MDPZ12},
quantum thermodynamics~\cite{TD00,PW09},
nonequilibrium quantum phase transitions~\cite{SAA+10,SMN+13},
testing element distinctness~\cite{Chi09},
and solving linear equations~\cite{HHL09}
and differential equations~\cite{Ber14}.
Experimental quantum simulation~\cite{BN09}
has been demonstrated in quantum computing implementations such as
ion traps~\cite{LHN+11,BSK+12,BMS+11},
atoms in optical lattice~\cite{LSA+07,WMBL11},
and superconducting circuits~\cite{GSP14}.
Whereas unitary evolution generated by a self-adjoint Hamiltonian
has so far been the major research focus,
algorithmic quantum simulation of quantum channels
(i.e., completely-positive trace-preserving mappings)~\cite{Sti55,Cho75,Kra83}
and open-system dynamics~\cite{NC00} is a nascent and exciting research area
both theoretically~\cite{KBG+11,WBOS13} and experimentally~\cite{SMN+13,FPK+12}.
Quantum channel simulators can play vital roles in quantum simulation for the study of
quantum non-Markovian effects~\cite{CGM+12,SPA+12},
dissipative quantum many-body dynamics~\cite{SMN+13},
and also for modelling quantum noise for the test of various protocols~\cite{TB05,MPG+13}.

Our aim is to develop a classical algorithm
that designs quantum circuits,
which simulate accurately an arbitrary qudit
(corresponding to states described by positive semidefinite trace-class operators~$\rho$
acting on the~$d$-dimensional Hilbert space $\mathscr{H}_d\cong\mathbb{C}^d$)
channel~$\mathcal{E}$
on fault-tolerant quantum computers
within a specified error tolerance~$\epsilon$
with respect to the diamond norm~\cite{AKN97}.
This quantum circuit needs to be executed efficiently with respect to~$\epsilon$,
i.e., with quantum space and time resources scaling as polylog$\frac{1}{\epsilon}$.
We treat classical space resources, i.e.~dits ($d$-dimensional digits, $d=2$ for bits), as free.
Previously two different cases have been considered:
Markovian qudit channel simulation
but with the Liouvillian rather than the channel as input~\cite{KBG+11}
and for a general single-qubit channel
as input with an efficient simulation circuit as output~\cite{WBOS13}
employing qubit extreme channel theory~\cite{Cho75,RSW02}.
Efficient and accurate algorithmic quantum simulation,
such that the output is delivered with minimal resources and within the pre-specified error tolerance,
is vital for constructing quantum simulators in the near term that answer computational
problems.

Generalizing from simulating qubit channels to qudit channels is not straightforward
because decomposing an arbitrary qudit channel into a convex combination of generalized extreme
channels is an open problem in quantum information~\cite{Rus07}.
We circumvent this obstacle by decomposing approximately, rather than exactly,
into a convex sum of generalized extreme channels,
and we construct a classical optimization algorithm~\cite{BV04}
that devises circuits for simulating generalized extreme channels so that the entire
qudit channel can be simulated by random concatenations of generalized extreme channel simulators.
The circuits devised by our algorithm show how to realize
algorithmic quantum-channel simulation.

We approach the problem of constructing a qudit channel circuit simulator
by constructing an algorithm whose inputs are the description of~$\mathcal E$,
the tolerance~$\epsilon$,
and the dimension~$d$ of the qudit Hilbert space
and delivering an output comprising a description of a simulation circuit
and the actual error~$\tilde{\epsilon}$.
In contrast to the case for unitary channels,
which can be constructed as a concatenation of other unitary channels,
the non-unitary channel is not such a simple sequence of channels~\cite{WC08}
thereby resulting in complicated approach to quantum channel simulation.

A direct procedure for a quantum simulation of a quantum channel is to
employ Stinespring dilation~\cite{Sti55},
which replaces the qudit channel $\mathcal{E}$
by a unitary channel $\mathcal{U}$,
with up to~$d^3$ dimensions,
followed by a partial trace over the environment
to recover the description of the channel $\mathcal{E}$.
The circuit for simulating the channel by
a dilated unitary channel acting on a Hilbert space of dimension~$d^3$
generically requires $O(d^6)$
single-qubit and two-qubit gates~\cite{BBC+95,MV06},
obtained from a small universal instruction set using a Solovay-Kitaev gate decomposition
approach~\cite{DN06,KMM13b,Kli13}.
Therefore, the time cost is $O(d^6)$ and the space cost is three qudits.

In the interest of bringing algorithmic quantum-channel simulation to its lowest possible
cost for experimental expediency,
we employ the procedure of approximately decomposing the channel
into a convex combination of generalized extreme channels.
The simulation circuit has a time cost of $O(d^2\log \frac{d^2}{\epsilon} )$,
which is the same as the time cost for simulating a unitary qudit channel,
hence a lower bound~\cite{BBC+95,BOB05}.
Furthermore this procedure requires a spatial resource of just two qudits
plus random dits,
hence reduces the quantum space cost by a third.

This work comprises three parts.
In Sec.~\ref{sec:extdecomp} we present our method for
the construction of extreme channels.
In Sec.~\ref{sec:circ} we construct quantum circuits for generalized extreme channels.
In Sec.~\ref{sec:algo} we discuss the quantum channel simulation algorithm.
We conclude briefly and provide supporting information in Appendix.

\section{Extreme quantum channels}
\label{sec:extdecomp}

A quantum channel $\mathcal{E}\in\mathscr{S}_d$,
the set of all channels for qudits of dimension~$d$,
can be represented as
\begin{equation}
\label{eq:channel}
	\mathcal{E}(\rho)
		=\sum_{i=0}^m K_i\rho K_i^\dagger
\end{equation}
for all states~$\rho$
with a set of linearly independent Kraus operators~\cite{Kra83}
\begin{equation}
	\{K_i:\mathscr{H}_d\to\mathscr{H}_d\}
\end{equation}
such that
\begin{equation}
	\sum_{i=0}^m K_i^\dagger K_i=\mathds{1},
\end{equation}
and Kraus rank~\cite{Kra83}
\begin{equation}
	m+1\leq d^2.
\end{equation}
A channel is extreme if and only if it cannot be written as a convex sum of other channels.
Equivalently a channel is extreme if and only if $m$ is bounded above by~$d-1$
and $\{K_i^\dagger K_j\}$ is a linearly independent set~\cite{Cho75}.

A channel is called a generalized extreme channel if its Kraus rank is at most~$d$,
and a generalized extreme channel which is not extreme is called a quasi-extreme channel~\cite{Rus07}.
Clearly the set of generalized extreme channels contains both extreme channels and quasi-extreme channels.

Next we propose a Kraus operator-sum representation
for an arbitrary rank-$d$ extreme or quasi-extreme channel.
First, we construct the sum representation
using the Heisenberg-Weyl basis
\begin{equation}
	\left\{X_i Z_j;i,j\in\mathbb{Z}_d\right\}
\end{equation}
for 	
\begin{equation}
	X_i =\sum_{\ell=0}^{d-1}|\ell\rangle\langle \ell+i|,\;
	Z_j =\sum_{\ell=0}^{d-1}\operatorname{e}^{\operatorname{i}2\pi \ell j/d}|\ell\rangle\langle \ell|.
\end{equation}
\begin{proposition}
\label{prop:ext}
A rank-$d$ extreme channel $\mathcal{E}^\text{e}\in\mathscr{S}_d$
can be represented by
\begin{equation}
\label{eq:extKraus}
	\mathcal{E}^\text{e}(\rho)=\sum_{i=0}^{d-1}K_i\rho K_i^\dagger
\end{equation}
for any Kraus operators satisfying
\begin{equation}
\label{eq:F}
	K_i:=W F_i V,\;
	F_i:=X_iE_i,\;
	E_i:=\sum_{j=0}^{d-1}a_{ij}Z_j,\,
	i\in\mathbb{Z}_d,
\end{equation}
for any unitary operators
\begin{equation}
	V, W \in SU(d),
\end{equation}
provided that
\begin{equation}
	\{a_{ij}\in \mathbb{C}\}
\end{equation}
is chosen such that the set $\{F_i^\dagger F_j\}$
is linearly independent and
\begin{equation}
	\sum_{i=0}^{d-1} F_i^\dagger F_i=\mathds{1}
\end{equation}
is satisfied.
\end{proposition}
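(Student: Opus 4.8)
The plan is to establish the sufficiency direction directly: verify that, under the stated hypotheses, the operators $K_i=WF_iV$ define a completely-positive trace-preserving map of Kraus rank exactly $d$ that meets Choi's extremality criterion recalled above. I read the proposition as a construction of rank-$d$ extreme channels from the data $(W,V,\{a_{ij}\})$, so the rigorous target is that every admissible choice yields such a channel; I would not try to realize an arbitrarily prescribed extreme channel in this form, since the family is an $O(d^2)$-parameter subfamily rather than the full set of rank-$d$ extreme channels.

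First I would isolate the two structural facts about the Heisenberg-Weyl operators that make every computation collapse. Since each $Z_j$ is diagonal, $E_i=\sum_j a_{ij}Z_j$ ranges over all diagonal operators and $F_i=X_iE_i$ is supported on the single cyclic diagonal shifted by $i$; moreover each $X_i$ is unitary, so $X_i^\dagger X_i=\mathds{1}$. Trace preservation then follows immediately from $\sum_i K_i^\dagger K_i=V^\dagger\bigl(\sum_i F_i^\dagger F_i\bigr)V=V^\dagger\mathds{1}V=\mathds{1}$, using unitarity of $W,V$ and the hypothesis $\sum_i F_i^\dagger F_i=\mathds{1}$; I would also note that $F_i^\dagger F_i=E_i^\dagger E_i$ is diagonal, so this hypothesis is really the $d$ normalization equations $\sum_i|(E_i)_{\ell\ell}|^2=1$. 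Complete positivity is automatic from the operator-sum form, and the Kraus rank is exactly $d$ because linear independence of $\{F_i^\dagger F_j\}$ forces linear independence of $\{F_i\}$, hence of $\{K_i=WF_iV\}$ since $W,V$ are invertible, so $m=d-1$.

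The extremality step is where Choi's criterion does the work: a channel with $m\le d-1$ is extreme iff $\{K_i^\dagger K_j\}$ is a linearly independent set. I would compute $K_i^\dagger K_j=V^\dagger F_i^\dagger F_j V$, so that conjugation by the unitary $V$ is a linear isomorphism of operator space carrying $\{F_i^\dagger F_j\}$ to $\{K_i^\dagger K_j\}$; the latter set is therefore linearly independent precisely when the former is, which is exactly the hypothesis. A point worth highlighting is that the Heisenberg-Weyl structure organizes this condition transparently: because $F_i^\dagger F_j=E_i^\dagger X_{j-i}E_j$ is supported on the cyclic diagonal indexed by $j-i$ modulo $d$, the single requirement on the full $d^2$-element set $\{F_i^\dagger F_j\}$ decouples into $d$ independent requirements, one per diagonal $k$, each asking that the $d$ operators $\{F_i^\dagger F_{i+k}\}_i$ form a basis of the $d$-dimensional space of operators on that diagonal.

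The reductions above are essentially algebraic and clean, so the step I expect to require the most care is confirming that the two hypotheses on $\{a_{ij}\}$ are mutually compatible, i.e.\ that the construction is non-vacuous. Here I would exploit the decoupled description just given: on the manifold cut out by the $d$ normalization equations, the per-diagonal basis conditions are open and generically satisfied, so a feasible choice of $\{a_{ij}\}$ exists and can moreover be exhibited explicitly, reducing to the known canonical form in the qubit case $d=2$, where $F_0$ is diagonal and $F_1$ anti-diagonal. Finally I would emphasize scope: this representation carves out a rich but proper subfamily of rank-$d$ extreme channels, which is precisely what the later approximate convex-combination decomposition relies on, rather than a claim to exhaust all extreme channels.
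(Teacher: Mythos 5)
Your sufficiency argument is correct and, in its technical core, coincides with the paper's own computation: both rest on the observation that $F_i^\dagger F_{i+\mu}$ is supported on the single cyclic diagonal indexed by $\mu$, so that linear independence of the full set $\{F_i^\dagger F_j\}$ decouples into $d$ per-diagonal conditions, each a generic nonvanishing-determinant condition on the coefficients; the paper's coefficients $b_{i\mu\nu}$, its partition of $\{F_i^\dagger F_{i+\mu}\}$ by $\mu$, and its matrices $B_\mu$ are exactly your per-diagonal bookkeeping. Trace preservation, the rank count, and the transfer of linear independence through $K_i^\dagger K_j=V^\dagger F_i^\dagger F_j V$ are handled identically, and your remark that $F_i^\dagger F_i=E_i^\dagger E_i$ reduces normalization to $d$ scalar equations (so the hypotheses are compatible and generically satisfiable) covers the non-vacuousness point.

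The genuine divergence is in what you take the proposition to assert. You prove only that every admissible choice of $(W,V,\{a_{ij}\})$ yields a rank-$d$ extreme channel, and you explicitly decline the converse. The paper's proof, however, announces that its task is to show the ansatz ``yields \emph{arbitrary} linearly independent operators $\{K_i^\dagger K_j\}$'' and concludes by claiming that composition with $V$ and $W$ realizes an arbitrary basis of $\mathcal{B}(\mathscr{H}_d)$ --- i.e.\ the universality direction, which is precisely what Corollary~\ref{coro:quasiext}, the convex decomposition~(\ref{eq:chadecom}), and the optimization over this ansatz all lean on. Measured against the paper, your proof therefore omits the load-bearing half of the statement. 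To your credit, your dimension count identifies exactly why that half is delicate: the ansatz carries $O(d^2)$ real parameters while the rank-$d$ extreme channels form an $O(d^3)$-dimensional family for $d\geq 3$, and conjugation by a single $V$ cannot carry the fixed basis $\{F_i^\dagger F_j\}$ to an arbitrary basis --- the paper's closing sentence is the weakest step of its own argument. But you should not silently retreat to the weaker statement: either supply the surjectivity argument the proposition (as used downstream) requires, or state explicitly that the representation is only a generic subfamily and explain why the later approximate decomposition survives that restriction.
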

\begin{proof}
Per definition, Eq.~(\ref{eq:extKraus}) holds for any rank-$d$ extreme channel
with~$\{K_i^\dagger K_j\}$ being linearly independent.
Thus, the proof focuses on showing that the ansatz~(\ref{eq:F})
yields arbitrary linearly independent operators $\{K_i^\dagger K_j\}$.

Linear independence of $\{K_i^\dagger K_j\}$
requires that
\begin{equation}
\label{eq:Xi}
	\Xi:=\sum_{i,j=0}^{d-1}\gamma_{ij}K_i^\dagger K_j=0
	\iff\gamma_{ij}\equiv 0 \; \forall i,j.
\end{equation}
From Eq.~(\ref{eq:F}),
\begin{equation}
	\Xi=V^\dagger \left(\sum_{ij}\gamma_{ij}F_i^\dagger F_j\right)V.
\end{equation}
This is a unitary conjugation of the sum in parentheses
so we ignore $V$ in the proof.
Therefore, we need to require linear independence of $\{F_i^\dagger F_j\}$. For
\begin{equation}
\label{eq:b}
	b_{i\mu\nu}:=\sum_{k,l=0}^{d-1} a^*_{ik}a_{i+\mu,l} \operatorname{e}^{\operatorname{i}2\pi[\mu l+\nu(l-k)]/d},\;
	\mu\in\mathbb{Z}_d,
\end{equation}
we observe that
\begin{equation}
	F_i^\dagger F_{i+\mu}=\sum_{\nu=0}^{d-1} b_{i\mu\nu} |\nu\rangle\langle \nu+\mu|;
\end{equation}
hence
\begin{equation}
	\text{tr}[(F_i^\dagger F_{i+\mu})^\dagger F_j^\dagger F_{j+\mu'} ]=0
\end{equation}
for $\mu\neq \mu'$.
Now we partition
\begin{equation}
\label{eq:partition}
	\left\{F_i^\dagger F_{i+\mu};i,\mu\in\mathbb{Z}_d\right\}
		\to\left\{\{F_i^\dagger F_{i+\mu};i\in\mathbb{Z}_d\};\mu\in\mathbb{Z}_d\right\}.
\end{equation}
For $\{F_i^\dagger F_{i+\mu}\}$ to be a linearly independent set,
each subset must be linearly independent.
For each subset,
\begin{equation}
	\Xi_\mu:=\sum_{i=0}^{d-1} \gamma_{i,i+\mu} F_i^\dagger F_{i+\mu}
\end{equation}
so
\begin{equation}
	\Xi =\sum_{\mu=0}^{d-1} \Xi_\mu.
\end{equation}
Then
$\Xi\equiv 0$
implies $\Xi_\mu \equiv 0 \; \forall \mu$.

Now we establish linear independence of $\{F_i^\dagger F_j\}$
by constraining each subset~(\ref{eq:partition}).
First we map each matrix $F_i^\dagger F_{i+\mu}$
to a vector
\begin{equation}
	\bm{b}_{i\mu}:=(b_{i\mu\nu}).
\end{equation}
Then linear independence of
\begin{equation}
	\{F_i^\dagger F_{i+\mu}\}
\end{equation}
is ensured by the condition that the determinant of each matrix
\begin{equation}
	B_\mu:=(\bm{b}_{i\mu})
\end{equation}
is nonzero;
i.e.,
\begin{equation}
	\text{det} B_\mu\neq 0 \; \forall \mu
\end{equation}
(except for a zero-measure subset of $\{a_{ij}\}$).
Then $\Xi_\mu\equiv 0$ implies
\begin{equation}
	\gamma_{i,i+\mu}\equiv 0 \; \forall i,\mu,
\end{equation}
which establishes linear independence of $\{F_i^\dagger F_j\}$
hence also $\{K_i^\dagger K_j\}$.

As $\{F_i^\dagger F_j\}$ spans $\mathcal{B}(\mathscr{H}_d)$,
the space of bounded linear operators on $\mathscr{H}_d$,
hence is a basis.
Composition with~$V$ and also $W$ ensures that an arbitrary basis
$\{K_i^\dagger K_j\}$ can be realized for an extreme channel.
Consequently,
the proof showing extremality of the channel~(\ref{eq:extKraus}) is complete.
\end{proof}

\begin{corollary}
The set of Kraus operators~$F_i\left(\{a_{ij}\in\mathbb{C}\}\right)$
has at most $d^2-d$ independent real parameters.
\label{cor:parameters}
\end{corollary}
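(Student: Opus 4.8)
The plan is to prove the bound by a direct count of real degrees of freedom: start from the raw parametrization, then subtract the trace-preservation constraints and the physically irrelevant redundancies. The raw data are the coefficients $\{a_{ij}:i,j\in\mathbb{Z}_d\}$, a total of $d^2$ complex numbers and hence $2d^2$ real parameters, and the goal is to show that normalization together with the unphysical phases trim this figure down to $d^2-d$.

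First I would recast the $E_i$ so that the trace-preservation condition becomes transparent. Since each $X_i$ is unitary, $F_i^\dagger F_i=E_i^\dagger X_i^\dagger X_i E_i=E_i^\dagger E_i$, so $\sum_i F_i^\dagger F_i=\sum_i E_i^\dagger E_i$ is automatically diagonal because every $E_i$ is diagonal. As a self-adjoint operator identity on $\mathscr{H}_d$ the condition $\sum_i E_i^\dagger E_i=\mathds{1}$ nominally fixes a $d\times d$ Hermitian matrix, and one accounts for the resulting real conditions; passing to the discrete-Fourier variables $c_{i\ell}=\sum_j a_{ij}\operatorname{e}^{\operatorname{i}2\pi\ell j/d}$, which is an invertible change of coordinates that loses no parameters, makes this explicit. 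The second reduction is the gauge freedom: the channel $\mathcal{E}^{\text{e}}$ is invariant under $F_i\mapsto\operatorname{e}^{\operatorname{i}\theta_i}F_i$, and this replacement preserves the ansatz form $F_i=X_iE_i$, so one overall phase per index $i$ — a total of $d$ real parameters — is physically irrelevant and must be discarded.

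The hard part will be the precise bookkeeping that turns these observations into the exact figure $d^2-d$, because the diagonal structure of the $E_i$ means that a large portion of the trace-preservation equations are satisfied identically, so one must determine carefully how many conditions genuinely bind and confirm that the phase redundancies are mutually independent of them. I would also have to verify that nothing further is fixed — in particular that no residual diagonal-unitary freedom survives beyond what is already absorbed into $V$ and $W$ — and note that the linear-independence requirement on $\{F_i^\dagger F_j\}$ from Proposition~\ref{prop:ext}, being an open full-measure condition, removes no dimensions. Finally I would cross-check the resulting count against the exactly solvable $d=2$ instance before asserting the general bound.
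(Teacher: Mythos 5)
Your proposal correctly identifies the raw count ($d^2$ complex coefficients $a_{ij}$, hence $2d^2$ real parameters), the $d$ real normalization constraints coming from $\sum_i F_i^\dagger F_i=\sum_i E_i^\dagger E_i=\mathds{1}$ (one per diagonal entry, since each $E_i^\dagger E_i$ is diagonal), and the $d$ irrelevant overall phases $F_i\mapsto\operatorname{e}^{\operatorname{i}\theta_i}F_i$. But these reductions only bring you down to $2d^2-2d=2(d^2-d)$, exactly twice the target, and you explicitly defer the remaining reduction to unspecified ``precise bookkeeping.'' That deferred step is not bookkeeping --- it is the entire content of the corollary --- and the route you have set up (normalization constraints, per-operator phases, plus at most $O(d)$ further phases absorbable into diagonal unitaries folded into $V$ and $W$) cannot close a residual gap of size $d^2-d$, which grows quadratically in $d$.

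The paper closes that gap by switching representations: it computes the Choi--Jamio{\l}kowski state of the channel generated by $\{F_i\}$ and observes that, in the Fourier variables $\tilde{a}_{i,l+i}=\sum_j a_{ij}\operatorname{e}^{\operatorname{i}2\pi(l+i)j/d}$, this state is block-diagonal with $d$ rank-one blocks $\bm{b}_i\bm{b}_i^\dagger$ for $\bm{b}_i=(\tilde{a}_{i,l+i})_l$ --- a $d$-sparse, rank-$d$ positive semidefinite matrix asserted to carry at most $d^2$ real parameters --- and then subtracts the $d$ normalization conditions to reach $d^2-d$; this count is what is later matched against the $d^2-d$ real Givens angles of the circuit in Proposition~\ref{prop:cir}. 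So the missing idea in your outline is precisely this structural collapse: the channel depends on $\{F_i\}$ only through these $d$ rank-one blocks, not on the $a_{ij}$ individually. Without either that Choi-state argument or an explicit demonstration that the phases of the $\tilde{a}_{i\ell}$ can all be gauged away (your $d=2$ cross-check would pass, but it does not generalize for free), the proposal does not establish the stated bound.
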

\begin{proof}
We prove the statement using the property of the Choi-Jamio{\l}kowski state~\cite{Cho75,Jam72}
\begin{equation}
\label{eq:C}
	\mathcal{C} := \mathcal{E} \otimes\mathds{1} (|\eta\rangle\langle\eta|),\;
	|\eta\rangle=\sum_{i=0}^{d-1}|i\rangle|i\rangle \in \mathscr{H}_d \otimes \mathscr{H}_d,
\end{equation}
for a channel $\mathcal{E}$ and~$\{|i\rangle\}$ the computational basis of $\mathscr{H}_d$.
With Prop.~\ref{prop:ext}
and defining $\tilde{a}_{i,l+i}:=\sum_{j=0}^{d-1}a_{ij}\operatorname{e}^{\operatorname{i}2\pi(l+i) j/d}$,
we find that
the Choi-Jamio{\l}kowski state $\mathcal{C}^\text{e}$ corresponding to $\{F_i\}$ is
\begin{equation}\label{eq:extchoi}
  \mathcal{C}^\text{e}=\sum_{i=0}^{d-1} \sum_{k,l=0}^{d-1}\tilde{a}^*_{i,k+i} \tilde{a}_{i,l+i}|l,l+i\rangle\langle k,k+i|,
\end{equation}
which is a~$d$-sparse, rank-$d$ positive semidefinite matrix
with at most $d^2$ real parameters.
Constrained by normalization,
$\{F_i\}$ has at most $d^2-d$ independent parameters.
\end{proof}

When the set $\{K_i^\dagger K_j\}$ is not linearly independent
our construction~(\ref{eq:F}) yields non-extreme yet quasi-extreme channels,
which are in the closure of the set of extreme channels~\cite{Rus07}.
As mentioned in the proof of Prop.~\ref{prop:ext},
the set of extreme channels dominates the set of all generalized extreme channels.
Also, both extreme and quasi-extreme channels with rank smaller than~$d$ can be realized if
some of the Kraus operators are zero matrices.
\begin{corollary}
\label{coro:quasiext}
A rank-$d$ generalized extreme channel $\mathcal{E}^\emph{g}\in\mathscr{S}_d$
can be represented by
\begin{equation}
	\mathcal{E}^\emph{g}(\rho)=\sum_{i=0}^{d-1}K_i\rho K_i^\dagger
\end{equation}
with Kraus operators~(\ref{eq:F})
for any unitary operators $V, W \in SU(d)$
and
\begin{equation}
	\sum_{i=0}^{d-1} F_i^\dagger F_i=\mathds{1}.
\end{equation}
\end{corollary}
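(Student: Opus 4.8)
The plan is to show that the ansatz~(\ref{eq:F}) constrained \emph{only} by the normalization $\sum_i F_i^\dagger F_i=\mathds{1}$ generates exactly the set of rank-$d$ generalized extreme channels, by sandwiching its image between the extreme and the generalized extreme channels and then closing it off topologically. Write $\mathscr{S}_d^{\text{e}}\subseteq\mathscr{S}_d^{\text{g}}\subseteq\mathscr{S}_d$ for the sets of extreme, generalized extreme, and all qudit channels, and let $\mathcal{A}$ denote the image of the parametrization $(\{a_{ij}\},V,W)\mapsto\mathcal{E}^{\text{g}}$ subject to that single constraint. I would establish the two easy inclusions $\mathscr{S}_d^{\text{e}}\subseteq\mathcal{A}\subseteq\mathscr{S}_d^{\text{g}}$ and then upgrade the containment to the equality $\mathcal{A}=\mathscr{S}_d^{\text{g}}$ through a compactness-and-closure argument.

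First I would note $\mathcal{A}\subseteq\mathscr{S}_d^{\text{g}}$: the ansatz furnishes a channel built from exactly $d$ Kraus operators $K_i=WF_iV$ obeying trace preservation, so its Choi rank is at most $d$ and it is by definition generalized extreme. The opposite inclusion $\mathscr{S}_d^{\text{e}}\subseteq\mathcal{A}$ is precisely the content of Prop.~\ref{prop:ext}: every rank-$d$ extreme channel is realized by~(\ref{eq:F}) for a suitable choice of $\{a_{ij}\}$ (with $\{F_i^\dagger F_j\}$ linearly independent), $V$, and $W$. The quasi-extreme channels, being the generalized extreme channels that fail extremality, correspond to the parameter choices where $\{F_i^\dagger F_j\}$ degenerates, i.e.\ $\det B_\mu=0$ for some $\mu$ in the notation of Prop.~\ref{prop:ext}; these form a zero-measure boundary of the extreme locus inside $\mathcal{A}$.

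To capture these quasi-extreme channels I would invoke that $\mathscr{S}_d^{\text{g}}$ is the closure of $\mathscr{S}_d^{\text{e}}$~\cite{Rus07}. It then suffices to show $\mathcal{A}$ is closed, for then $\mathscr{S}_d^{\text{g}}=\overline{\mathscr{S}_d^{\text{e}}}\subseteq\overline{\mathcal{A}}=\mathcal{A}$, which combined with $\mathcal{A}\subseteq\mathscr{S}_d^{\text{g}}$ gives $\mathcal{A}=\mathscr{S}_d^{\text{g}}$ and proves the corollary. Closedness holds because $\mathcal{A}$ is the continuous image of a compact set: $V,W\in SU(d)$ range over a compact group, and the normalization bounds the amplitudes, since tracing $\sum_i F_i^\dagger F_i=\mathds{1}$ with $X_i^\dagger X_i=\mathds{1}$ and $\text{tr}(Z_j^\dagger Z_k)=d\,\delta_{jk}$ forces $\sum_{i,j}|a_{ij}|^2=1$, so $\{a_{ij}\}$ lies on a sphere. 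The map from these parameters to the channel (equivalently to its Choi state~(\ref{eq:extchoi})) is polynomial, hence continuous, and a continuous image of a compact set is compact and therefore closed.

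I expect the main obstacle to be the closure step rather than the algebra: one must correctly identify $\mathscr{S}_d^{\text{g}}$ with $\overline{\mathscr{S}_d^{\text{e}}}$ and verify that the admissible parameter domain, after imposing the full matrix equation $\sum_i F_i^\dagger F_i=\mathds{1}$ and not merely its trace, is genuinely compact, so that sequences of rank-$d$ extreme-channel parameters approaching a quasi-extreme channel have limits that remain admissible and whose images remain in $\mathcal{A}$. The easy inclusions and the explicit bound $\sum_{i,j}|a_{ij}|^2=1$ are routine; the care lies in the topological argument that upgrades density to surjectivity.
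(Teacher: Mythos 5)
Your argument is sound but follows a genuinely different route from the paper's. The paper's proof of Corollary~\ref{coro:quasiext} is direct and purely algebraic: it observes that $\operatorname{tr}(F_i^\dagger F_j)=0$ for $i\neq j$, so $\{F_i\}$ (hence $\{K_i\}$) is always a linearly independent $d$-tuple, and then asserts that the freedom in $V$ and $W$ suffices to carry this tuple to an arbitrary linearly independent set of the same cardinality, whence any rank-$d$ channel is covered. You instead sandwich the image $\mathcal{A}$ of the ansatz between the extreme and the generalized extreme channels and close it off topologically: $\mathscr{S}_d^{\mathrm{e}}\subseteq\mathcal{A}$ from Prop.~\ref{prop:ext}, $\mathcal{A}$ compact as the continuous image of the compact constrained parameter set (your computation $\sum_{i,j}|a_{ij}|^2=1$ is correct, and imposing the full matrix equation $\sum_i F_i^\dagger F_i=\mathds{1}$ only cuts this sphere down to a closed subset, preserving compactness), and $\mathscr{S}_d^{\mathrm{g}}=\overline{\mathscr{S}_d^{\mathrm{e}}}$ from~\cite{Rus07} together with closedness of the rank-$\leq d$ condition. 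What your route buys is that it replaces the paper's strongest and least-justified step --- that two unitaries acting as $F_i\mapsto WF_iV$ can reach an \emph{arbitrary} linearly independent $d$-tuple, a claim that is difficult to defend on parameter-counting grounds alone --- with a clean limiting argument; what it costs is that the entire burden of surjectivity is shifted onto the final paragraph of the proof of Prop.~\ref{prop:ext}, which makes essentially the same assertion for extreme channels. Both proofs therefore stand or fall with that surjectivity claim, which neither re-derives; yours is the more disciplined way to extend it from the extreme channels to their closure, and you correctly identify the closure step and the compactness of the admissible parameter domain as the points requiring care.
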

\begin{proof}
  From the construction~(\ref{eq:F}), it holds $\text{tr} F_i^\dagger F_j =0$ for $i\neq j$,
  which means the set $\{F_i\}$ (also $\{K_i\}$) is linearly independent.
  The unitary operators $V$ and $W$ can take this set to an arbitrary linearly independent set
  with the same cardinality.
  This proves that any rank-$d$ channel can be written in the proposed form.
\end{proof}

\section{Quantum circuits for generalized extreme channels}
\label{sec:circ}

Now that we have the sum representation of the (quasi)extreme channel~$\mathcal{E}^\text{e}$
with respect to $\{F_i\}$~(\ref{eq:F}),
we construct a quantum circuit for simulating state evolution through the channel
by employing Stinespring dilation~\cite{Sti55}.
The Kraus operators $\{F_i\}$
can be realized by a channel
\begin{equation}
\label{eq:Uchannel}
	\mathcal{U}(\bullet):= U\bullet U^\dagger,\;
	U\in SU(d^2)
\end{equation}
with~$U$ acting on the system~(s) qudit and
an ancillary~$d$-dimensional ancilla~(a) qudit
such that $F_i=_\text{a}\!\!\!\langle i| U|0\rangle_\text{a}$.
Such Kraus operators~$\{F_i\}$ trivially satisfy
linear independence and
\begin{equation}
	\sum_iF_i^\dagger F_i=\mathds{1}.
\end{equation}

In principle the quantum circuit for a generalized extreme channel could be constructed in three stages:
solve the Kraus decomposition in Prop.~\ref{prop:ext},
then use the Kraus operators~$\{F_i\}$ to construct the unitary operator
based on Stinespring dilation,
and finally decompose it into a quantum circuit comprising gates from a finite
universal instruction set.
However, this method is stymied by the intractability of the nonlinear algebra\"{i}c equations that arise
from the Kraus decomposition so this approach is not viable.

Instead we adopt a different tack,
which is to find the quantum circuit by optimization.
In this approach, we construct the circuit
for a generalized extreme channel as a sequence of instruction-set gates
and optimize over the set of circuits
such that
the diamond-norm distance~\cite{AKN97}
(rather than the induced Schatten one-norm~\cite{KBG+11,WBOS13})
\begin{equation}
	\|\mathcal{E}-\mathcal{\tilde{E}}\|_{1\rightarrow 1}
		:=\max_\rho\|\mathcal{E}(\rho)-\mathcal{\tilde{E}}(\rho)\|_1
\end{equation}
between the input channel~$\mathcal E$ and
the approximate channel~$\tilde{\mathcal E}$ satisfies
\begin{align}
\label{eq:diamond}
	\|\mathcal{E}-\mathcal{\tilde{E}}\|_\diamond
		:=\|\mathcal{E}\otimes\mathds{1}-\mathcal{\tilde{E}}
			\otimes\mathds{1}\|_{1\rightarrow 1}\leq \epsilon.
\end{align}
The diamond-norm distance is preferred
as it gives worst-case gate error,
and has the operational meaning that the probability of distinguishing between
the two channels from their outputs is
\begin{equation}
	\frac{1+\epsilon/2}{2}.
\end{equation}

Next we present the single- and two-qudit gate set for this circuit construction.
Three types of single-qudit gates are specified by
\begin{equation}
	X_{jk} :=|j\rangle\langle k|+|k\rangle\langle j|,
\end{equation}
by the Givens rotation, which is a two-level unitary gate~\cite{NC00}
\begin{equation}
	G_{jk}(\theta) :=\cos\theta(|j\rangle\langle j|+|k\rangle\langle k|)
		+\sin\theta(|k\rangle\langle j|-|j\rangle\langle k|),
\end{equation}
and by the gate~$X_i$ from the Heisenberg-Weyl basis ($i,j,k\in \mathbb{Z}_d$).
Our gate notation implies an identity operator acting on the rest of the space.

We augment these gates by their two-qudit controlled counterparts
\begin{equation}
	CX_{jk}:=|j\rangle_\text{s}\langle j|\otimes X_{jk}
\end{equation}
and
\begin{equation}
	CG_{jk}(\theta):=|j\rangle_\text{s}\langle j|\otimes G_{jk}(\theta)
\end{equation}
with the system as control,
and
\begin{equation}
	CX_i:= X_i\otimes|i\rangle_\text{a}\langle i|
\end{equation}
with the ancilla as control.
We introduce a qudit multiplexer,
which generalizes the qubit case~\cite{SBM06},
as a sequence of two controlled Givens rotations
\begin{equation}
	M_{jk}(\alpha,\beta):=CG_{jk}(\alpha)CG_{kj}(-\beta)
\end{equation}
depicted in Fig.~\ref{fig:mult},
with the proof of the circuit equivalence
following straightforwardly
for the qubit case~\cite{BBC+95}.

\begin{figure}
\begin{indented}
\item[]
\includegraphics[width=12.5cm]{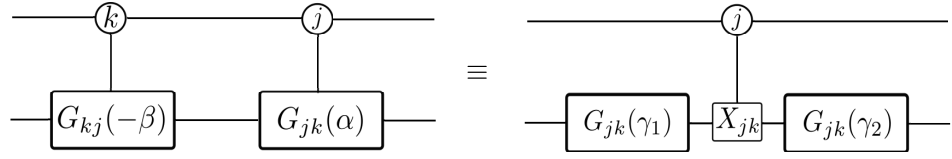}
\end{indented}
\caption{
	Circuit diagram for a multiplexer~$M_{jk}(\alpha,\beta)$.
	Each wire represents an evolving qudit in the~$d$-ary representation
	with~\textcircled{j} and~\textcircled{k}~$d$-ary control operation,
	and
	$\gamma_1\equiv\frac{1}{2}\left(\beta-\alpha+\frac{\pi}{2}\right)$
	and $\gamma_2\equiv\frac{1}{2}\left(\beta+\alpha-\frac{\pi}{2}\right)$.
}
\label{fig:mult}
\end{figure}

\begin{proposition}
\label{prop:cir}
Given any $\mathcal{V}(\bullet):=V\bullet V^\dagger$
and $\mathcal{W}(\bullet):=W\bullet W^\dagger$ with~$V,W\in SU(d)$,
any channel $\mathcal{W}\left(\text{tr}_\text{a}\mathcal{U}\right)\mathcal{V}$
is extreme provided that
\begin{equation}\label{eq:extU}
	U:= \prod_{i=d-1}^1 CX_i \prod_{j=d-1}^1\prod_{k=j-1}^0 M_{jk}(\alpha_{jk},\beta_{jk}),
\end{equation}
for all but a zero-measure subset of the rotation-angle sets~$\{\alpha_{jk}\}$ and~$\{\beta_{jk}\}$
with at most $(d^2-d)/2$ elements per set.
\end{proposition}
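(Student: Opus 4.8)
The plan is to recognize that the prescribed circuit $U$ in~(\ref{eq:extU}) is exactly a Stinespring dilation whose induced Kraus operators $F_i:={}_\text{a}\langle i|U|0\rangle_\text{a}$ carry the generalized-extreme form~(\ref{eq:F}), so that extremality follows from Prop.~\ref{prop:ext} once genericity of the rotation angles is established. First I would propagate $|m\rangle_\text{s}|0\rangle_\text{a}$ through the two stages of~(\ref{eq:extU}). The multiplexer stage $\prod_{j}\prod_{k}M_{jk}(\alpha_{jk},\beta_{jk})$ consists only of system-controlled Givens rotations, hence is block-diagonal in the system basis and acts solely on the ancilla conditioned on the system state. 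By the control conventions, $CG_{jk}(\alpha_{jk})$ fires only on $|j\rangle_\text{s}$ and $CG_{kj}(-\beta_{jk})$ only on $|k\rangle_\text{s}$, so the angle $\alpha_{jk}$ enters the ancilla evolution only when the system is $|j\rangle_\text{s}$ and $\beta_{jk}$ only when it is $|k\rangle_\text{s}$. Consequently the stage implements $|m\rangle_\text{s}|0\rangle_\text{a}\mapsto|m\rangle_\text{s}|\phi_m\rangle_\text{a}$, where each ancilla state $|\phi_m\rangle_\text{a}=\sum_n c_{mn}|n\rangle_\text{a}$ is produced by the \emph{disjoint} angle subset $\{\alpha_{mk}:k<m\}\cup\{\beta_{jm}:j>m\}$ of cardinality $d-1$.

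Next I would apply the ancilla-controlled stage $\prod_{i=d-1}^{1}CX_i=\sum_i X_i\otimes|i\rangle_\text{a}\langle i|$ to obtain $U|m\rangle_\text{s}|0\rangle_\text{a}=\sum_n c_{mn}X_n|m\rangle_\text{s}|n\rangle_\text{a}$, whence $F_i|m\rangle_\text{s}=c_{mi}|m-i\rangle_\text{s}$, i.e.\ $F_i=X_iE_i$ with $E_i$ diagonal. The entry of $E_i$ at position $m$ equals $c_{mi}$, and since $X_iE_i=\sum_m\big(\sum_j a_{ij}\operatorname{e}^{\operatorname{i}2\pi mj/d}\big)|m-i\rangle_\text{s}\langle m|$, the amplitudes are tied to the parameters of~(\ref{eq:F}) by $c_{mi}=\sum_j a_{ij}\operatorname{e}^{\operatorname{i}2\pi mj/d}$, an invertible discrete-Fourier relation for each $i$. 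Thus the circuit realizes precisely the ansatz~(\ref{eq:F}), and the amplitudes $\{c_{mi}\}$ parametrize the $\{a_{ij}\}$. A parameter count then confirms consistency: each set $\{\alpha_{jk}\}$, $\{\beta_{jk}\}$ ranges over pairs $j>k$ in $\mathbb{Z}_d$, namely $\binom{d}{2}=(d^2-d)/2$ angles, for $d^2-d$ real parameters in total, matching Cor.~\ref{cor:parameters}.

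To finish, I would invoke Prop.~\ref{prop:ext}, by which extremality is equivalent to linear independence of $\{F_i^\dagger F_j\}$, in turn equivalent to $\det B_\mu\neq0$ for every $\mu$ with each $B_\mu$ assembled from the $\{a_{ij}\}$ as in the partition~(\ref{eq:partition}). Because the amplitudes $c_{mi}$, and therefore every entry of each $B_\mu$, are real-analytic (trigonometric) functions of the rotation angles, each $\det B_\mu$ is a real-analytic function of those angles whose zero set is either all of angle space or a set of measure zero. Exhibiting a single angle assignment for which all $\det B_\mu\neq0$ rules out the former, so $\text{tr}_\text{a}\mathcal{U}$ is extreme off a zero-measure subset. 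Composition with $\mathcal{V}$ and $\mathcal{W}$ preserves this because the relevant products transform as $K_i^\dagger K_j=V^\dagger F_i^\dagger F_j V$, a unitary conjugation leaving linear independence intact.

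The step I expect to be the main obstacle is the first-stage bookkeeping: verifying that the interleaved product ordering of~(\ref{eq:extU}) genuinely realizes, for each system state $|m\rangle_\text{s}$, a sequential preparation of $|\phi_m\rangle_\text{a}$ from $|0\rangle_\text{a}$ routed through the hub index $m$ (in particular that the rotation $G_{m0}(\alpha_{m0})$ acts before the remaining Givens rotations so that amplitude can first leave $|0\rangle_\text{a}$), and confirming that the induced angle-to-amplitude map is nondegenerate enough that its image is not entirely contained in the non-extremal variety. Producing one explicit nondegenerate angle choice is exactly what secures the ``all but a zero-measure subset'' conclusion.
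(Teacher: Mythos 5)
Your proposal is correct and follows essentially the same route as the paper's proof: split $U$ into the multiplexer stage (which acts diagonally on the system and prepares ancilla states $\sum_i u_{i\ell}\,|i\rangle_\text{a}$) and the $CX_i$ stage, read off $F_i = X_i E_i$ with diagonal $E_i$ matching the ansatz of Prop.~\ref{prop:ext}, and conclude extremality for generic angles with the parameter count $\binom{d}{2}$ per set. Your real-analyticity argument for the ``all but a zero-measure subset'' clause is actually more explicit than the paper's (which merely asserts the angles can be chosen, e.g.\ randomly), though like the paper you stop short of exhibiting the single witness angle assignment that would complete that genericity step.
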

\begin{proof}
We prove the theorem by showing that the partial trace of~$U$ (\ref{eq:extU})
yields Kraus operators $F_i=_\text{a}\!\!\!\langle i| U|0\rangle_\text{a}$
that satisfy the normalization and linear independence conditions
of Prop.~\ref{prop:ext}.
To this end we define~$U'$ as a product of controlled-Givens rotations such that
\begin{equation}
	U=\left(\prod_{i=d-1}^1 CX_i\right)U'.
\end{equation}

We define
\begin{equation}
	\{u_{i\ell}\in \mathbb{R};i,\ell\in\mathbb{Z}_d\}
\end{equation}
such that
\begin{equation}
	U' |0\rangle_\text{a}|\ell\rangle_\text{s}
		= \sum_{i=0}^{d-1}u_{i\ell}|i\rangle_\text{a}|\ell\rangle_\text{s}.
\end{equation}
The unitary operator $U'$ corresponds to a channel
with diagonal Kraus operators $\{E_i\}$
such that
\begin{equation}
	E_i|\ell\rangle_\text{s}=u_{i\ell}|\ell\rangle_\text{s}
\end{equation}
as
\begin{align}
\label{eq:U'}
    U' |0\rangle_\text{a} |\ell\rangle_\text{s}
    = \sum_{i=0}^{d-1} |i\rangle_\text{a} \langle i| U' |0\rangle_\text{a} |\ell\rangle_\text{s}
    = \sum_{i=0}^{d-1} |i\rangle_\text{a} E_i |\ell\rangle_\text{s}.
\end{align}
We can identify $E_i$ in Eq.~(\ref{eq:U'})
with~$E_i$ in Eq.~(\ref{eq:F})
by setting
\begin{equation}
	u_{i\ell}\equiv\tilde{a}_{i\ell}:=\sum_{j=0}^{d-1}a_{ij}\operatorname{e}^{\operatorname{i}2\pi\ell j/d}.
\end{equation}

Reincorporating the gates $CX_i$ yields
\begin{align}
\label{eq:}
    U |0\rangle_\text{a} |\ell\rangle_\text{s}
    = \sum_{i=0}^{d-1} |i\rangle_\text{a} X_i E_i |\ell\rangle_\text{s}
    = \sum_{i=0}^{d-1} |i\rangle_\text{a} F_i |\ell\rangle_\text{s}.
\end{align}
A projection $|i\rangle_\text{a}\langle i|$ on the ancilla corresponds to the action of $F_i$ on the system.
The angles $\alpha_{jk}$ and $\beta_{jk}$ can be chosen (e.g., randomly) to satisfy
the linear independence of $\{F_i^\dagger F_j\}$.
This means the circuit $U$ realizes the Kraus operators $\{F_i\}$ for an extreme channel.
As there are ${d \choose 2}$ multiplexers,
the total number of independent parameters
is consistent with Corollary~\ref{cor:parameters}.
\end{proof}

As for constructing Kraus operators,
when the set $\{F_i^\dagger F_j\}$ is not linearly independent,
the circuit~(\ref{eq:extU}) realizes quasi-extreme channels.
As a result, the circuit in Prop.~\ref{prop:cir}
successfully yields simulations of arbitrary generalized extreme channels.

\section{Quantum channel simulation algorithm}
\label{sec:algo}

In this section, we first describe the classical optimization algorithm to design the quantum circuit,
and then we analyze the resultant quantum circuit in terms of space and time cost.

\subsection{Classical optimization algorithm to design the quantum circuit}

The first step in developing the quantum-circuit design procedure is to show the existence of
decomposition of a channel into a convex sum of generalized extreme channels.
Following from Ruskai's Conjectures~2, 3, 4 and 5~\cite{Rus07},
any channel $\mathcal{E}\in\mathscr{S}_d$ can be expressed as
\begin{equation}\label{eq:chadecom}
	\mathcal{E}=\sum_{\imath=1}^{d} p_\imath \mathcal{E}^\text{g}_\imath, \quad
\sum_{\imath=1}^{d} p_\imath=1, \quad 0 \leq p_\imath \leq 1,
\end{equation}
for generalized extreme channels $\{\mathcal{E}^\text{g}_\imath\}\subset\mathscr{S}_d$ of Kraus rank $\leq d$.
The upper bound for the convex sum is not guaranteed to be~$d$,
but this upper bound is implied by Ruskai's conjectures,
which we adopt in our algorithm.
Even assuming that the upper bound holds,
an analytical formula for such a decomposition is unknown.

Now we describe the algorithm for the simulation of a general qudit channel.
The algorithm accepts the dimension~$d$ of the Hilbert space,
the description of a channel $\mathcal{E}$
and an error tolerance $\epsilon$ as input.
The output is a quantum circuit
(with an output of zero reserved for the case that the algorithm aborts before finding a satisfactory circuit)
and a bound~$\tilde{\epsilon}$ on the resultant circuit with
respect to the actual channel~$\mathcal E$ being simulated.

Our algorithmic procedure is as follows.
Based on Ruskai's conjecture,
we assume that any given channel can be decomposed into
a~$d$-fold sum of generalized extreme channels~(\ref{eq:chadecom}),
and we know from Prop.~\ref{prop:cir}
a description of the circuit for any generalized extreme channel.
If the assumed decomposition does not hold, the algorithm can fail and an output of zero for the circuit description ensues.
Thus, Eq.~(\ref{eq:chadecom}) and Prop.~\ref{prop:cir} together
inform us that a quantum circuit for the qudit channel can be realized by
choosing generalized extreme channel circuits randomly with each $\imath^\text{th}$ circuit chosen
with probability~$p_\imath$.

Our algorithm initially chooses a set of~$d$ generalized extreme channels randomly
and tests whether the resultant guessed channel~$\tilde{\mathcal{E}}$ is within distance~$\epsilon$
of the correct channel~$\mathcal E$.
Typically the guessed channel fails to be within the error tolerance
so we employ an optimization algorithm to pick a new~$\tilde{\mathcal{E}}$
and try again.
This procedure is repeated until a satisfactory circuit is found or
aborted if the optimization routine fails to find a good circuit within a
pre-specified number of trials.

We now determine the number of parameters in~$\mathcal E$ for optimization.
From Eq.~(\ref{eq:chadecom})
we see that there are~$d-1$ parameters of~$\{p_\imath\}$.
The unitary matrices~$V$ and~$W$ in Prop.~\ref{prop:ext} could be constructed as
products
\begin{equation}
	V=\prod_i V_i,\;
	W=\prod_j W_j
\end{equation}
with as many unitary operators~$\varkappa$
in the two products as needed to provide enough parameters for the optimization.

As there are~$d$ generalized extreme channels and $d^2-1$ free parameters in $SU(d)$,
we have
\begin{equation}
	\varkappa d(d^2-1)
\end{equation}
free parameters associated with~$V$ and~$W$.
We add this number of parameters to the number of parameters for~$d$ generalized extreme channels,
namely $d(d^2-d)$ with~$d^2-d$ the number of free angles $\{\alpha_{jk},\beta_{jk}\}$,
and then add these to the number of probabilities~$\{p_\imath\}$.
The total number of parameters for the approximate channel should satisfy the inequality
\begin{equation}
\label{eq:count}
	\varkappa d(d^2-1)+d(d^2-d)+(d-1)\geq d^4-d^2
\end{equation}
with the right-hand side corresponding to the number of parameters that specify the qudit channel.
For the most efficient simulation,
we minimize~$\varkappa$ so
\begin{equation}
\label{eq:paracount}
	\varkappa=\left\lceil\frac{(d-1)(d^2+d+1)}{d(d+1)}\right\rceil.
\end{equation}
As an example, a qutrit channel has 72 parameters,
but our optimization is over 92 parameters.
Our analysis reduces to the qubit case~\cite{WBOS13}.
In that case~$d=2$ so the channel~$\mathcal E$ has 12 parameters
whereas the optimization of~$\mathcal E$ is over 17 parameters.
We note that the optimization precludes an efficient circuit-design algorithm even
in the qubit case, contrary to the earlier claim~\cite{WBOS13}.

The final step for the algorithm is to construct the objective function
for the optimization problem.
Mathematically we represent the correct $\mathcal{E}$ channel
by the Choi-Jamio{\l}kowski state~$\mathcal{C}$,
and the approximate circuit is represented by the state
\begin{equation}
\label{eq:C'}
	\mathcal{C}'=\sum_\imath p_\imath \mathcal{C}^\text{g}_\imath.
\end{equation}
Channel decomposition in the Choi-Jamio{\l}kowski state representation is elaborated
in~\ref{sec:choirep}.
Our goal is to find the best possible~$\mathcal{C}'$ by optimization.

The objective function for optimization is given by the trace distance~$D_t(\mathcal{C}, \mathcal{C}')$,
which bounds the $\diamond$-norm distance~(\ref{eq:diamond})
between two channels~$\mathcal{E}$
and~$\mathcal{E}'$ according to~\cite{Wat13}
\begin{equation}\label{eq:cbound}
	D_t(\mathcal{C}, \mathcal{C}')\leq \frac{\epsilon}{2}
\implies \|\mathcal{E}-\mathcal{E}'\|_{\diamond}\leq\epsilon.
\end{equation}
The trace distance is a convex function over the set of quantum states~\cite{NC00}.
Each $\mathcal{C}_\imath^\text{g}$ can be parameterized by a set of rotation angles
$\{\theta_{\imath\jmath}: \jmath=1,\dots, \varkappa (d^2-1)\}$ for the prior and posterior unitary operators,
and a set of rotation angles $\{\varphi_{\imath\jmath}: \jmath=1,\dots, d^2-d\}$,
which denote the sets $\{\alpha_{jk}\}$ and $\{\beta_{jk}\}$ altogether
from Eq.~(\ref{eq:extU}).
The range of the objective function is
\begin{equation}\label{eq:range}
	0\leq D_t(\mathcal{C}, \mathcal{C}')\leq d.
\end{equation}
The optimization is to find~$\mathcal{C}'$ such that $D_t$ is minimized according to
\begin{equation}
	\min_{\{\{p_\imath\},\{\theta_{\imath\jmath}\},\{\varphi_{\imath\jmath}\}\}}
		D_t(\mathcal{C}, \mathcal{C}').
\end{equation}
The minimization over probabilities~$\{p_\imath\}$~(\ref{eq:C'})
is subject to the constraint
$\sum_\imath p_\imath -1=0$.
Our algorithm employs a simple nonlinear programming method~\cite{YS06}
on channels generated by partial trace of Haar-random-generated unitary operators on the dilated
space~\cite{Toth08}.
We simulate on MATLAB$^\circledR$ using MultiStart and GlobalSearch algorithms;
simulated annealing was less effective.

We have demonstrated numerically that our optimization algorithm
is successful for systems of up to four dimensions.
Our simulations yield errors of order~$10^{-2}\sim10^{-4}$ for qubit channels,
$10^{-2}$ for qutrit channels,
and $10^{-1}$ for two-qubit channels.
The errors for the case $d=4$ from the numerical simulation is rather large yet acceptable for demonstrating
the efficacy of our algorithm.
For high-accuracy simulation,
significantly greater computational resources are required. 	
As the system dimension~$d$ increases,
we expect at least a quadratic increase in run-time of the simulation with respect to~$d$
due to the built-in method employed by GlobalSearch or MultiStart program.
Moreover, given that resources are finite, e.g.\ run-time,
numerical optimization is not even guaranteed to succeed due to becoming stuck at
certain points in the parameter space.
Such problems are quite generic for optimization problems.
In order to illustrate how the simulation works,
a concrete example for simulating one randomly chosen qutrit channel is presented in~\ref{sec:qutrit},
and the pseudo-code of the algorithm is presented in~\ref{sec:pseudocode}.

\subsection{Space and time cost of quantum simulation circuit}

Here we consider the time and space cost for the quantum circuit
to simulate a generalized extreme qudit channel~$\mathcal{E}^\text{g}$
on a quantum computer based on qudits and single- and two-qudit unitary gates.
The generalized extreme qudit channel is dilated to a unitary operator~$U$ on two qudits,
which contains a sequence of multiplexers and a sequence of $CX_i$ gates acting between the system
and the ancillary qudits,
and also a prior qudit rotation and a posterior qudit rotation acting on the system.

An arbitrary single qudit rotation can be decomposed into a product of at most $d(d-1)/2$,
which is $O(d^2)$, two-level unitary gates ($\S$4.5.1, \cite{NC00}).
A controlled-Givens rotation $CG_{jk}(\theta)$ can be realized by
two Givens rotations and a $CX_{jk}$ gate, similar to the qubit case~\cite{BBC+95}.
The sequence of $CX_i$ gates can be realized by classically controlled $X_i$ gates since the ancillary system is traced out,
with an $X_i$ gate acting on the system conditioned on a $|i\rangle\langle i|$ projection on the ancilla.
As a result, the generalized extreme channel circuit can be
realized by a product of $O(d^2)$ $CX_{jk}$ gates and
continuously-parameterized Givens rotations.

To assess the cost of the quantum circuit,
we employ the Solovay-Kitaev-Dawson-Nielsen algorithm for qudits~\cite{DN06}.
From the error tolerance $\epsilon$, which is an algorithmic input for circuit design,
any Givens rotation can be approximated by an
\begin{equation}
	O\left(\log\frac{d^{2}}{\epsilon}\right)
\end{equation}
sequence of universal qudit gates~\cite{DN06}.
As a result, the number of elementary gates,
hence computational time cost, of the generalized extreme channel circuit is
\begin{equation}
\label{eq:cost}
	O\left(d^2\log\frac{d^{2}}{\epsilon}\right),
\end{equation}
and the space cost is two qudits.

The circuit corresponding to~$\tilde{U}$
yields an approximation~$\mathcal{\tilde{E}}^\text{g}$
to the desired generalized extreme channel~$\mathcal{E}^\text{g}$.
From~\cite{WBOS13}
\begin{equation}\label{eq:wbos13}
\|\mathcal{E}^\text{g}-\mathcal{\tilde{E}}^\text{g}\|_{1\rightarrow 1}\leq2 \|U-\tilde{U}\|
=2\|U\otimes\mathds{1}-\tilde{U}\otimes\mathds{1}\|,
\end{equation}
we obtain
\begin{equation}
\label{eq:skbound}
	\|U-\tilde{U}\|\leq\frac{\epsilon}{2}\implies
		\|\mathcal{E}^\text{g}-\mathcal{\tilde{E}}^\text{g}\|_{\diamond}\leq\epsilon.
\end{equation}
From strong convexity and the chain property of trace distance,
relations (\ref{eq:cbound}) and~(\ref{eq:skbound}) above
together ensure the desired simulation accuracy~(\ref{eq:diamond}).

Finally simulating an arbitrary channel is implemented by
probabilistically implementing different generalized extreme channels
according to the distribution $\{p_\imath\}$~(\ref{eq:chadecom}).
The space and time costs of a single-shot implementation of the channel
are one dit and two qudits for space and
the classical time cost for generating the random dits plus
\begin{equation}
	O\left(d^2\log\frac{d^{2}}{\epsilon}\right)
\end{equation}
quantum gates.
In other words,
the quantum computational cost for simulating a random qudit channel is the same
as for simulating the generalized extreme channel,
and the additional cost is only classical:
dits plus running a random-number generator.

This cost can be explained by recognizing that the qudit channel simulator is
simply a randomized generalized extreme channel simulator.
On the other hand, estimating qudit observables accurately could require many shots,
with the number of shots depending on the particular observable.

\section{Conclusions}
\label{sec:conc}

In this work, we have presented a classical circuit-design algorithm for constructing
quantum simulation circuits for accurately simulating arbitrary qudit channels.
Our algorithm employs channel decomposition into a convex sum of generalized extreme channels,
leading to quantum circuits only for simulating generalized extreme channels,
which consume less computational resources.
In particular, we propose an ansatz for any extreme and quasi-extreme channels,
which has a concise mathematical structure.
We also show that the classical circuit-design algorithm can be
formalized as an optimization problem,
and we have performed numerical proof-of-principle simulations for low dimensional systems.

Our quantum channel simulation scheme transcends the standard quantum circuit model,
in that our circuit exploits resources other than quantum gates and measurements.
Using classical resources, e.g., bits, we can reduce the demand for quantum resources.
In particular, we show that it is possible to achieve the circuit lower bound~$O(d^2)$
for simulating non-unitary processes, which is the same lower bound for simulating
unitary operators.
Due to such a significant reduction of circuit cost,
our method is especially suitable for experimental implementation in the near future.

Our algorithm is designed for implementation in a fault-tolerant quantum computer
in order to achieve optimal efficiency,
but our simulation scheme can be implemented sooner with current technologies that are not fault-tolerant,
such as superconducting circuits, trapped ions or photons~\cite{BN09}.
Such systems admit more than two levels hence support qudits.
A proper error analysis for any physical system is required to assess its feasibility in the absence
of our fault-tolerance assumption.

This work generalizes the previous result of single-qubit-channel quantum simulation~\cite{WBOS13}
based on the extreme channel theory developed in Ref.~\cite{RSW02}.
Although a closed form of channel decomposition remains elusive,
our method provides an alternative approach to tackle this open problem
of channel decomposition into a convex sum of generalized extreme channels for the qudit case~\cite{Rus07}.
Although our algorithm relies on a conjectured upper bound to the number of
generalized extreme channels required to decompose the given channel~\cite{Rus07},
our numerical simulations succeed in delivering circuit designs that we verify work up to qudits of dimension 4.
If the conjectured upper bound for channel decomposition does not hold,
our algorithm can be modified by increasing the upper bound as an input to the algorithm,
but our algorithm is not guaranteed to be tractable under failure of this conjecture.
However, our algorithm is valuable as it works well for low dimensions,
and, if it fails at higher dimensions, superior algorithms could be tried.

\section{Acknowledgments}

	The authors acknowledges AITF, NSERC and USARO for financial support.
	This project was supported in part by the National Science Foundation under Grant No.\ NSF PHY11-25915.
    We thank D. W. Berry, R. Colbeck,
    M. C. de Oliveira, I. Dhand, R. Iten, R. Sweke, and E.\ Zahedinejad for
    both valuable discussions and critical comments.
    Also we thank J. Eisert for pointing out that the classical optimization is not convex.

\appendix

\section{Channel decomposition in Choi-Jamio{\l}kowski state representation}
\label{sec:choirep}

Here we present the channel decomposition~(\ref{eq:chadecom})
in Choi-Jamio{\l}kowski state representation~(\ref{eq:C}).
It turns out there exist nice block-matrix structures of Choi-Jamio{\l}kowski states
for general channels and extreme channels.
Then we present the channel decomposition with the block-matrix structures.

\begin{proposition}
\label{prop:block}
The Choi-Jamio{\l}kowski state~$\mathcal{C}$ for any qudit channel
$\mathcal{E}$ can be written in the block form
\begin{align}\label{eq:choiblock}
\mathcal{C}&=\begin{pmatrix}
           C_1 & C_{12} & \dots & C_{1d} \\
          \cdot & C_2 & \cdots & \cdot \\
          \vdots & \vdots & \ddots & \vdots \\
          \cdot & \cdot & \dots & C_{d} \\
         \end{pmatrix},
\end{align}
with~$d\times d$ positive matrices $C_k\geq0$ $(k=1,2,\dots, d)$
and
\begin{equation}\label{}
C_{kl}=C_{lk}^\dagger=\sqrt{C_k}B_{kl}\sqrt{C_l},
\end{equation}
with contraction $B_{kl}\leq 1$.
\end{proposition}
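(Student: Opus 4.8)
The plan is to read the block structure straight off the definition of the Choi-Jamio{\l}kowski state and then to extract everything from the single fact that $\mathcal{C}\geq 0$. Expanding~(\ref{eq:C}) gives
\begin{equation}
	\mathcal{C}=\sum_{k,l=0}^{d-1}\mathcal{E}(|k\rangle\langle l|)\otimes|k\rangle\langle l|,
\end{equation}
so, grouping the $d^2$ basis vectors into $d$ blocks labelled by the ancilla index, the $(k,l)$ block is exactly $C_{kl}=\mathcal{E}(|k\rangle\langle l|)$. The diagonal blocks $C_k=\mathcal{E}(|k\rangle\langle k|)$ are positive semidefinite because $|k\rangle\langle k|\geq 0$ and $\mathcal{E}$ is positive, and the Hermiticity relation $C_{kl}=C_{lk}^\dagger$ follows from $\mathcal{E}$ being Hermiticity-preserving, i.e.\ $\mathcal{E}(A^\dagger)=\mathcal{E}(A)^\dagger$. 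This disposes of everything in the statement except the contraction factorization.

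For the factorization the essential input is $\mathcal{C}\geq 0$, which is Choi's theorem: complete positivity of $\mathcal{E}$ is equivalent to positivity of its Choi-Jamio{\l}kowski state. The key reduction is that I only need a statement about pairs of blocks. Fixing $k\neq l$ and retaining only the rows and columns in blocks $k$ and $l$ yields the $2d\times 2d$ principal submatrix
\begin{equation}
	\begin{pmatrix} C_k & C_{kl}\\ C_{lk} & C_l\end{pmatrix},
\end{equation}
which is positive semidefinite because every principal submatrix of a positive semidefinite matrix is itself positive semidefinite.

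The remaining step is a standard lemma on positive $2\times2$ operator matrices: a block matrix $\left(\begin{smallmatrix}A & B\\ B^\dagger & C\end{smallmatrix}\right)$ with $A,C\geq 0$ is positive semidefinite if and only if $B=\sqrt{A}\,K\sqrt{C}$ for some contraction $K$, $\|K\|\leq 1$. Applying it with $A=C_k$, $C=C_l$, $B=C_{kl}$ delivers $B_{kl}=K$ with $\|B_{kl}\|\leq 1$, which is precisely the claimed structure. I would establish the direction I need (positivity $\Rightarrow$ factorization) by congruence: when $C_k,C_l$ are invertible, conjugating by $\mathrm{diag}(C_k^{-1/2},C_l^{-1/2})$ sends the block matrix to $\left(\begin{smallmatrix}\mathds{1} & B_{kl}\\ B_{kl}^\dagger & \mathds{1}\end{smallmatrix}\right)$, whose positivity is equivalent to $\|B_{kl}\|\leq 1$; this is exactly where the contraction bound originates.

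The main obstacle is the singular case, where $C_k$ or $C_l$ is not invertible and the square roots cannot be inverted to define $B_{kl}$ directly. Here I would argue that positivity of the $2\times2$ block forces the range of $C_{kl}$ into the range of $\sqrt{C_k}$ and the range of $C_{kl}^\dagger$ into the range of $\sqrt{C_l}$ (a range-inclusion fact of Douglas-lemma type), so that $B_{kl}$ is well defined on the supports via the Moore-Penrose pseudoinverse and extended by zero on the kernels; alternatively, replacing $C_k$ by $C_k+\delta\mathds{1}$ and letting $\delta\to 0^+$ yields a contraction in the limit by compactness of the operator-norm unit ball. Either route gives the factorization with $\|B_{kl}\|\leq1$, and since this holds for every pair $(k,l)$ the block structure~(\ref{eq:choiblock}) for the full matrix follows.
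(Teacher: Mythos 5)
Your proposal is correct and follows essentially the same route as the paper: reduce to $2\times 2$ principal block submatrices (which inherit positive semidefiniteness from $\mathcal{C}\geq 0$) and invoke the standard lemma that such a block matrix is positive semidefinite iff the off-diagonal block factors as $\sqrt{C_k}\,B_{kl}\sqrt{C_l}$ with a contraction $B_{kl}$. You simply supply more detail than the paper does, notably the explicit identification $C_{kl}=\mathcal{E}(|k\rangle\langle l|)$ and the treatment of the singular case via range inclusion or a $\delta\to 0^+$ limit.
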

\begin{proof}
The block form follows from the following two facts:
i)~The positive semidefiniteness of a matrix is equivalent to the condition that all the principle minors are nonnegative.
Note that the principle minor is the determinant of the submatrices formed by columns and rows in the same set.
ii)~For the $2\times 2$ case~\cite{HJ91}, a block matrix
$$M=\begin{pmatrix} C_1 & C_{12} \\ C_{12}^\dagger & C_2 \\  \end{pmatrix} $$
is positive semidefinite iff $C_1,C_2\geq 0$, and $C_{12}=\sqrt{C_1}B_{12}\sqrt{C_2}$, for contraction $B_{12}\leq 1$.
The form~(\ref{eq:choiblock}) is a generalization of this case.
\end{proof}

Note that not all states in the form~(\ref{eq:choiblock}) are Choi-Jamio{\l}kowski states,
since Choi-Jamio{\l}kowski states are a particular kind of bipartite states.
The condition on the Choi-Jamio{\l}kowski state is that $\text{tr}_1 \mathcal{C}= \mathcal{E}(\mathds{1})$
and $\text{tr}_2 \mathcal{C}=\mathds{1}$,
where $\text{tr}_{1(2)} $ means the partial trace over the
1$\textsuperscript{st}$ (2$\textsuperscript{nd}$) part of the Choi-Jamio{\l}kowski state.
Furthermore, we also find the block-matrix form of generalized extreme Choi-Jamio{\l}kowski states,
in which the contraction matrices are substituted by unitary matrices.
\begin{proposition}
\label{Lemma:extrchoi}
The Choi-Jamio{\l}kowski state $\mathcal{C}^\emph{g}$
for any generalized extreme qudit channel $\mathcal{E}^\emph{g}$ can be written in the block form
\begin{align}\label{eq:extchoiblock}
\mathcal{C}^\emph{g}&=\begin{pmatrix}
           C_1^\emph{g} & C_{12}^\emph{g} & \dots & C_{1d}^\emph{g} \\
          \cdot & C_2^\emph{g} & \dots & \cdot \\
          \vdots & \vdots & \ddots & \vdots \\
          \cdot & \cdot & \dots & C_{d}^\emph{g} \\
         \end{pmatrix},
\end{align}
with~$d\times d$ positive matrices $C^\emph{g}_k\geq0$ $(k=1,2,\dots, d)$,
\begin{equation}\label{}
C_{kl}^\emph{g}=\sqrt{C_k^\emph{g}}U_{kl}\sqrt{C^\emph{g}_l},
\end{equation}
and unitary operators
\begin{equation}\label{}
U_{kl}:=\prod^{l-1}_{s=k}U_{s,s+1}
\end{equation}
with unitary operators $U_{s,s+1}\in SU(d)$.
\end{proposition}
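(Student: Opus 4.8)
The plan is to read off the matrix blocks of $\mathcal{C}^\text{g}$ directly from the action of $\mathcal{E}^\text{g}$ on the operator basis $\{|k\rangle\langle l|\}$, and then to exploit the rank-$d$ structure of generalized extreme channels to promote the contraction $B_{kl}$ of Prop.~\ref{prop:block} to a genuine unitary. First I would record that, with the block index taken to be the reference (second) tensor factor of the Choi-Jamio{\l}kowski state~(\ref{eq:C}), the $(k,l)$ block is $C^\text{g}_{kl}=\mathcal{E}^\text{g}(|k\rangle\langle l|)$; this is consistent with the explicit form~(\ref{eq:extchoi}). Expanding in the Kraus operators~(\ref{eq:F}) of Corollary~\ref{coro:quasiext} gives $C^\text{g}_{kl}=\sum_i K_i|k\rangle\langle l|K_i^\dagger$. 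I would then collect the vectors $K_i|k\rangle$ into the columns of a single reduced Kraus matrix $\mathcal{K}_k$, so that $C^\text{g}_k=\mathcal{K}_k\mathcal{K}_k^\dagger$ and $C^\text{g}_{kl}=\mathcal{K}_k\mathcal{K}_l^\dagger$. Positivity $C^\text{g}_k\geq0$ is then immediate, recovering the diagonal blocks of~(\ref{eq:extchoiblock}).

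The key step is the polar decomposition $\mathcal{K}_k=\sqrt{C^\text{g}_k}\,V_k$. Because $\mathcal{E}^\text{g}$ is a generalized extreme channel, its Kraus rank is at most $d$, so $\mathcal{K}_k$ is a square $d\times d$ matrix and its polar factor $V_k$ is a \emph{genuine} unitary rather than a co-isometry; this is exactly the feature unavailable for a general channel, where the reduced Kraus matrix is $d\times(m+1)$ with $m+1$ possibly as large as $d^2$, so $V_kV_l^\dagger$ contracts the range of $V_l^\dagger$ and one is left only with the contraction of Prop.~\ref{prop:block}. Substituting yields $C^\text{g}_{kl}=\sqrt{C^\text{g}_k}\,(V_kV_l^\dagger)\,\sqrt{C^\text{g}_l}$, so the glue is the unitary $U_{kl}:=V_kV_l^\dagger$.

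Finally I would establish the product form by a telescoping identity: setting $U_{s,s+1}:=V_sV_{s+1}^\dagger$ one has $\prod_{s=k}^{l-1}U_{s,s+1}=V_kV_{k+1}^\dagger V_{k+1}V_{k+2}^\dagger\cdots V_{l-1}V_l^\dagger=V_kV_l^\dagger=U_{kl}$, which is the claimed factorization; the case $k>l$ follows from $C^\text{g}_{kl}=(C^\text{g}_{lk})^\dagger$. The global phases of the $V_k$ are free and can be fixed by normalizing determinants so that each $U_{s,s+1}\in SU(d)$.

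The main obstacle will be the degenerate case in which some diagonal block $C^\text{g}_k$ is not invertible, so $\sqrt{C^\text{g}_k}$ has no inverse and the polar factor $V_k$ is determined only on the range of $\mathcal{K}_k$. I would handle this either by extending the partial isometry to a full unitary on $\mathscr{H}_d$ (the off-diagonal block is insensitive to the extension because it is sandwiched by the square roots, which annihilate the complementary subspace) or by a continuity/closure argument using that quasi-extreme channels lie in the closure of the extreme ones~\cite{Rus07}. Showing that the off-diagonal glue is unitary and not merely contractive is the entire content separating this statement from Prop.~\ref{prop:block}, and the Kraus-rank-$d$ hypothesis is precisely what secures it.
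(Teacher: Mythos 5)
Your proof is correct, but it takes a genuinely different --- and in fact more direct --- route than the paper's. The paper's proof works backwards from the claimed form: it factorizes the block matrix as $\mathcal{C}^\text{g}=A\tilde{U}^\dagger\tilde{\mathds{1}}\tilde{U}A$, with $A$ the block-diagonal matrix of the $\sqrt{C^\text{g}_k}$ and $\tilde{U}$ block upper-triangular, and concludes that any matrix of the proposed form is positive semidefinite of rank at most $d$, i.e.\ consistent with a Kraus-rank-$d$ channel; the forward implication actually asserted by the proposition (that \emph{every} generalized extreme channel's Choi--Jamio{\l}kowski state admits such a writing) is supported there only by this rank count. You instead start from the Kraus representation, form the reduced Kraus matrices $\mathcal{K}_k$ with columns $K_i|k\rangle$ so that $C^\text{g}_{kl}=\mathcal{K}_k\mathcal{K}_l^\dagger$, and observe that the Kraus-rank-$\leq d$ hypothesis makes each $\mathcal{K}_k$ square, so its polar factor is a genuine unitary $V_k$ rather than the contraction of Prop.~\ref{prop:block}; the telescoping $U_{kl}=V_kV_l^\dagger=\prod_{s=k}^{l-1}V_sV_{s+1}^\dagger$ then delivers the product structure, and your handling of singular $C^\text{g}_k$ (extend the partial isometry to a unitary; the off-diagonal blocks are insensitive to the extension because they are sandwiched by the square roots) is sound. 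Your argument thus supplies exactly the forward direction the paper's proof leaves implicit, while the paper's factorization buys the converse fact that the ansatz never leaves the set of rank-$\leq d$ Choi states, which is what the optimization algorithm actually needs. One caveat: when the $C^\text{g}_k$ are invertible the polar factors $V_k$ are uniquely determined, so their phases are \emph{not} free and $\det U_{s,s+1}$ need not equal $1$; the conclusion should really be $U_{s,s+1}\in U(d)$ rather than $SU(d)$ --- but that imprecision sits in the statement itself (and throughout the paper's use of $SU(d)$), not in your argument.
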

\begin{proof}
The matrix $\mathcal{C}^\text{g}$ can be decomposed as $\mathcal{C}^\text{g}=AUA$, with
\begin{align}
A=\begin{pmatrix}
           \sqrt{C^\text{g}_1} &  &  & \makebox(0,0){\text{\huge0}}\\
           & \sqrt{C^\text{g}_2} &  &  \\
           &   & \ddots &  \\
          \text{\huge0}  &  & & \sqrt{C^\text{g}_{d}} \\
         \end{pmatrix},
U=\begin{pmatrix}
           \mathds{1} & U_{12} & \dots & U_{1d} \\
          \cdot & \mathds{1} & \dots & \cdot \\
          \vdots & \vdots & \ddots & \vdots \\
          \cdot & \cdot & \dots & \mathds{1} \\
         \end{pmatrix},
\end{align}
where $A$ a diagonal block matrix, and $U$ can be further written as
$U=\tilde{U}^\dagger \tilde{\mathds{1}}\tilde{U}$, with
\begin{align}
\tilde{U}=\begin{pmatrix}
           \mathds{1} & U_{12} & \dots & U_{1d} \\
           & \mathds{1} & \dots & \cdot \\
           &  & \ddots & \vdots \\
           & \text{\huge0}  &  & \mathds{1} \\
         \end{pmatrix},
\tilde{\mathds{1}}=\begin{pmatrix}
           \mathds{1} &  &  & \makebox(0,0){\text{\huge0}}  \\
           & \bm{0} &  &  \\
           &   & \ddots &  \\
           \text{\huge0} &   & & \bm{0} \\
         \end{pmatrix},
\end{align}
where $\tilde{U}$ is an upper triangular matrix,
$\bm{0}$ represents zero block matrix,
and the large zeros represents zero entries.
As a result,
\begin{equation}
	\mathcal{C}^\text{g}=A\tilde{U}^\dagger \tilde{\mathds{1}}\tilde{U}A.
\end{equation}
This factorization implies that $\mathcal{C}^\text{g}$ is positive semidifinite,
and its rank is bounded above by~$d$,
which is the same as the number of Kraus operators for the generalized extreme channel.
Note that $\text{tr}_1 \mathcal{C}^\text{g}= \mathcal{E}^\text{g}(\mathds{1})$
and $\text{tr}_2 \mathcal{C}^\text{g}=\mathds{1}$.
\end{proof}

\begin{corollary}
\label{cor:choidecom}
The Choi-Jamio{\l}kowski state~$\mathcal{C}$ for any qudit channel $\mathcal{E}\in\mathscr{S}_d$
can be expressed as
\begin{equation}
\label{eq:choidecom}
	\mathcal{C}=\sum_{\imath=1}^{d} p_\imath \mathcal{C}^\emph{g}_\imath, \quad
\sum_{\imath=1}^{d} p_\imath=1, \quad 0\leq p_\imath \leq 1,
\end{equation}
with each $\mathcal{C}^\emph{g}_\imath$ corresponding to
a generalized extreme channel $\mathcal{E}^\emph{g}_\imath$.
\end{corollary}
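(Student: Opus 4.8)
The plan is to exploit the linearity of the Choi-Jamio\l{}kowski correspondence together with the block structures established in Propositions~\ref{prop:block} and~\ref{Lemma:extrchoi}. Since the map $\mathcal{E}\mapsto\mathcal{C}=\mathcal{E}\otimes\mathds{1}(|\eta\rangle\langle\eta|)$ is linear, applying it to both sides of the channel decomposition~(\ref{eq:chadecom}) immediately yields $\mathcal{C}=\sum_\imath p_\imath\mathcal{C}^\text{g}_\imath$ with the same probabilities $\{p_\imath\}$ and with each $\mathcal{C}^\text{g}_\imath$ the Choi-Jamio\l{}kowski state of the generalized extreme channel $\mathcal{E}^\text{g}_\imath$. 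Thus the existence of the convex decomposition in Choi-Jamio\l{}kowski form is inherited directly from the channel-level decomposition, which we adopt on the strength of Ruskai's conjecture.

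What remains is to make the structural content explicit by matching the block forms. First I would write the general Choi-Jamio\l{}kowski state $\mathcal{C}$ in the block form~(\ref{eq:choiblock}), whose off-diagonal blocks are governed by contractions $B_{kl}$ with $\|B_{kl}\|\leq 1$, and recall from Proposition~\ref{Lemma:extrchoi} that a generalized extreme Choi-Jamio\l{}kowski state has the same diagonal blocks but with the contractions replaced by the chained unitaries $U_{kl}=\prod_{s=k}^{l-1}U_{s,s+1}$. The task then reduces to expressing the contraction data of $\mathcal{C}$ as a convex average of such unitary data. The key elementary fact is that any contraction $B$ is a convex combination of unitaries: writing a singular-value decomposition $B=W\Sigma V^\dagger$ with singular values $\sigma=\cos\theta\in[0,1]$, one has $\Sigma=\frac{1}{2}(D+D^\dagger)$ for the unitary $D=\mathrm{diag}(e^{i\theta_r})$, whence $B=\frac{1}{2}(WDV^\dagger+WD^\dagger V^\dagger)$ is an equal mixture of two unitaries.

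The main obstacle is that the off-diagonal blocks $B_{kl}$ are not independent: they are tied together both by the joint positivity of the full block matrix $\mathcal{C}$ and by the trace conditions $\mathrm{tr}_1\mathcal{C}=\mathcal{E}(\mathds{1})$, $\mathrm{tr}_2\mathcal{C}=\mathds{1}$ that characterize genuine Choi-Jamio\l{}kowski states, and the generalized extreme states further require the specific chain factorization $U_{kl}=\prod_{s=k}^{l-1}U_{s,s+1}$ rather than arbitrary unitary blocks. A naive block-by-block contraction-to-unitary decomposition therefore need not produce admissible states, nor respect the cardinality bound of $d$ terms. This is precisely the difficulty flagged in the main text: even granting Ruskai's conjectured bound of $d$ summands, no closed-form decomposition is known, so I would not attempt to construct the pairs $\{p_\imath,\mathcal{C}^\text{g}_\imath\}$ analytically. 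Instead the existence claim is secured purely by linearity from~(\ref{eq:chadecom}), while the block-matrix matching above serves to identify what the summands look like and to motivate the numerical optimization over unitary parameters described in Section~\ref{sec:algo}.
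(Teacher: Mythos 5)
Your proposal is correct and follows essentially the same route as the paper: the decomposition is inherited directly from Eq.~(\ref{eq:chadecom}) (resting on Ruskai's conjectures) via linearity of the Choi-Jamio{\l}kowski map, and the block forms of Propositions~\ref{prop:block} and~\ref{Lemma:extrchoi} are then used only to restate the identity blockwise. Your added observation that any contraction is an equal mixture of two unitaries is a sensible aside, and you rightly recognize that it does not yield a constructive blockwise proof, which matches the paper's stance.
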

\begin{proof}
Following from Eq.~(\ref{eq:chadecom})
based on Ruskai's conjectures~\cite{Rus07} we have
\begin{equation}
	\mathcal{C}=\sum_{\imath=1}^{d} p_\imath \mathcal{C}^\text{g}_\imath
\end{equation}
with each generalized extreme state $\mathcal{C}^\text{g}_\imath$ corresponding to
a generalized extreme channel $\mathcal{E}^\text{g}_\imath$.
With block-matrix forms~(\ref{eq:choiblock}) and (\ref{eq:extchoiblock}),
Eq.~(\ref{eq:choidecom}) is equivalent to
\begin{subequations}
\begin{align}\label{}
  C_k & = \sum_{\imath=1}^{d} p_\imath C^\text{g}_{k;\imath}, \\
  C_{kl} & = \sum_{\imath=1}^{d} p_\imath C^\text{g}_{kl;\imath},
\end{align}
\end{subequations}
for $k, l=1,2,\dots,d$, $k<l$.
\end{proof}

\section{Simulation example for qutrit channels}
\label{sec:qutrit}

For qutrit channels,
our classical algorithm accepts the description of any qutrit channel and an error tolerance~$\epsilon$,
quantified by diamond norm~\cite{KSV02},
and proceeds by decomposing it into a convex combination of generalized extreme channels,
and then delivers quantum circuits for its simulation.
The quantum circuit processes any qutrit system state
to generate output state within given error tolerance for the worst case.

The input qutrit channel $\mathcal{E}$ needs to be chosen randomly,
and this is realized by randomly generating unitary operator in $SU(27)$ according to Haar measure,
since a channel can be realized by unitary operator followed by a partial trace on the environment~\cite{Sti55}.
We employ the MATLAB$^\circledR$ package~\cite{Toth08}.
For instance, a unitary operator $U\in SU(27)$ is generated with command ``runitary(3,3)'' in MATLAB$^\circledR$.

The set of Kraus operators for the input channel is obtained from the relation $K_i=\langle i|U|0\rangle$.
Then, the Choi-Jamio{\l}kowski state~$\mathcal{C}$ is obtained from Eq.~(\ref{eq:C}).
Also, we can use reshaping operation~\cite{BZ06,Mis11},
which is defined as, given an $m\times m$ matrix $A=[a_{ij}]$ with elements $a_{ij}$,
\begin{equation}\label{}
\text{res}(A)=(a_{11},\dots,a_{1m},\dots,a_{m1},\dots,a_{mm})^\text{T}.
\end{equation}
The Choi-Jamio{\l}kowski state takes the form
\begin{equation}\label{}
\mathcal{C}=\sum_i \text{res}(K_i)\cdot \text{res}(K_i)^\dagger.
\end{equation}

For example, we generate a unitary operator (not shown here since it is $27 \times 27$),
and obtain the Choi-Jamio{\l}kowski state for an input channel as
\begin{scriptsize}
\begin{align}\label{eq:3Cexample}\nonumber
  \mathcal{C}= &
\left(\begin{matrix}
   0.3105 + 0.0000i &  0.1052 + 0.0154i&   0.0394 - 0.0099i&   0.0554 + 0.0013i&  -0.0892 - 0.0667i\\
0.1052 - 0.0154i&   0.2526 + 0.0000i & -0.0174 + 0.0148i &  0.0715 - 0.0388i & -0.0307 - 0.0814i\\
0.0394 + 0.0099i & -0.0174 - 0.0148i  & 0.2473 + 0.0000i  & 0.0302 + 0.0637i  & 0.0600 + 0.0425i\\
0.0554 - 0.0013i  & 0.0715 + 0.0388i   &0.0302 - 0.0637i &  0.2891 + 0.0000i&  -0.0066 - 0.0301i\\
-0.0892 + 0.0667i  &-0.0307 + 0.0814i&   0.0600 - 0.0425i & -0.0066 + 0.0301i&   0.2667 + 0.0000i \\
0.0185 - 0.0149i&  -0.1021 - 0.0296i  & 0.0800 - 0.1476i&  -0.0147 - 0.0141i  & 0.0816 - 0.0849i\\
-0.0070 - 0.0066i&  -0.0250 - 0.0841i  &-0.0986 - 0.0101i&  -0.0501 + 0.0559i  &-0.0968 - 0.1558i\\
-0.1068 - 0.1226i &  0.0778 + 0.0614i&   0.0475 + 0.0007i &  0.1728 + 0.0571i&   0.0234 + 0.0100i\\
0.1131 - 0.0192i & -0.0717 - 0.0218i & -0.0440 - 0.0073i   &0.1132 - 0.0481i  &-0.0568 + 0.0703i
\end{matrix}\right. \\
   & \left.\begin{matrix}
     0.0185 + 0.0149i & -0.0070 + 0.0066i& -0.1068 + 0.1226i&   0.1131 + 0.0192i \\
   -0.1021 + 0.0296i & -0.0250 + 0.0841i&  0.0778 - 0.0614i&  -0.0717 + 0.0218i\\
    0.0800 + 0.1476i  &-0.0986 + 0.0101i & 0.0475 - 0.0007i & -0.0440 + 0.0073i\\
   -0.0147 + 0.0141i&  -0.0501 - 0.0559i  &0.1728 - 0.0571i  & 0.1132 + 0.0481i\\
   0.0816 + 0.0849i  &-0.0968 + 0.1558i&  0.0234 - 0.0100i & -0.0568 - 0.0703i\\
    0.3716 + 0.0000i  & 0.0306 + 0.1539i&  -0.1073 - 0.0026i&   0.0882 + 0.0653i\\
    0.0306 - 0.1539i   &0.4004 + 0.0000i & -0.0986 + 0.0147i & -0.0247 - 0.0042i\\
   -0.1073 + 0.0026i&  -0.0986 - 0.0147i  &0.4807 + 0.0000i&  -0.0641 - 0.0998i\\
     0.0882 - 0.0653i&  -0.0247 + 0.0042i  &-0.0641 + 0.0998i &  0.3811 + 0.0000i
\end{matrix}\right).
\end{align}
\end{scriptsize}

\noindent The matrix~$\mathcal{C}$ is positive with eigenvalues
 0.0018,    0.0244,    0.0662,    0.1366,    0.2499,
 0.4415,    0.5808,    0.6519,    0.8469.

According to our decomposition method,
we need to approximate the Choi-Jamio{\l}kowski state~$\mathcal{C}$ by
\begin{equation}\label{eq:choidecom3}
\mathcal{C}'=\sum_{i=1}^{3} p_i \mathcal{C}^\text{g}_i, \quad
\sum_{i=1}^{3} p_i=1, \quad 0 \leq p_i \leq 1,
\end{equation}
and each $\mathcal{C}^\text{g}_i$ corresponds to one generalized extreme channel.
Each generalized extreme channel is specified by three Kraus operators
\begin{subequations}\label{eq:qutrit}
\begin{equation}
F_0=
\begin{pmatrix}
          \cos a \cos c & 0 & 0 \\
          0 & \cos b & 0 \\
          0 & 0 & \cos d \\
         \end{pmatrix},
\end{equation}
\begin{equation}\label{}
F_1=\begin{pmatrix}
          0 & \sin b\cos e & 0 \\
          0 & 0 & -\sin d \sin f \\
          \sin a & 0 & 0 \\
         \end{pmatrix},
\end{equation}
\begin{equation}\label{}
F_2=\begin{pmatrix}
          0 & 0 & \sin d \cos f \\
          \cos a \sin c & 0 & 0 \\
          0 & \sin b \sin e & 0 \\
         \end{pmatrix},
\end{equation}
\end{subequations}
with parameters $0\leq a,b,c,d,e,f\leq 2\pi$, and three initial and final unitary operators,
denoted as $R_1$, $R_2$, and $R_3 \in SU(3)$ acting on the system.
We let there be two initial unitary operators $R_3$ followed by~$R_2$,
and one final unitary operator $R_1$.
The unitary operator in $SU(3)$ which has 8 real parameters~\cite{NV12} is parameterized as
\begin{equation}
U=
\begin{pmatrix}
          e^{i\phi_1}c_1c_2 & e^{i\phi_3}s_1 & e^{i\phi_4}c_1s_2 \\
          e^{-i\phi_4-i\phi_5}s_2s_3-e^{i\phi_1+i\phi_2-i\phi_3}s_1
          c_2c_3 & e^{i\phi_2}c_1c_3 & -e^{-i\phi_1-i\phi_5}c_2s_3-e^{i\phi_2-i\phi_3+i\phi_4}s_1
          s_2c_3 \\
          -e^{-i\phi_2-i\phi_4}s_2c_3-e^{i\phi_1-i\phi_3+i\phi_5}s_1
          c_2s_3 & e^{i\phi_5}c_1s_3 & e^{-i\phi_1-i\phi_2}c_2c_3-e^{-i\phi_3+i\phi_4+i\phi_5}s_1
          s_2s_3
         \end{pmatrix},
\end{equation}
where $c_k\equiv\cos \theta_k$ and $s_k\equiv\sin \theta_k$.
There are three Euler angles $\theta_j$ ($0\leq \theta_j \leq \pi/2$; $j=1,2,3$)
and five phases $\phi_k$ ($0\leq \phi_k \leq 2\pi$; $k=1,2,3,4,5$).

For one qutrit generalized extreme channel, the circuit executes the operator
\begin{equation}\label{eq:ext3}
U_\text{qut}:=CX_2CX_1M_{21}(\alpha_{21},\beta_{21})M_{20}(\alpha_{20},\beta_{20})M_{10}(\alpha_{10},\beta_{10}),
\end{equation}
together with a prior and a posterior qutrit rotation acting on the system.
The three multiplexers contain six controlled-Givens rotations,
which can be implemented by six $CX_{jk}$ gates and nine Givens rotations
obtained from basic technique of circuit design~\cite{BBC+95}.
The prior and posterior qutrit rotations acting on the system
can be realized by six Givens rotations
since a qutrit rotation can be decomposed as a product of three Givens rotations each acting on a qubit subspace of the qutrit~\cite{NV12}.
The gates $X_1$ in $CX_1$ can be decomposed as $X_{21}X_{10}$,
and $X_2$ can be decomposed as $X_{21}X_{20}$,
and then
\begin{equation}\label{eq:3cx}
  CX_2CX_1=CX_{21}CX_{20}CX_{21}CX_{10}
\end{equation}
with the ancilla as control for all of the four $CX_{jk}$ gates.
In all, there are 10 $CX_{jk}$ gates and 15 Givens rotations in the circuit for a qutrit generalized extreme channel.
Furthermore, if classical feedback is available,
the last four $CX_{jk}$ gates can be replaced by classically controlled $X_{jk}$ gates,
as the case of qubit channel simulation~\cite{WBOS13}.

The optimization is implemented such that the trace distance
$D_t(\mathcal{C}, \mathcal{C}')\leq \epsilon/2$.
Algorithms such as MultiStart, GlobalSearch or Simulated Annealing in MATLAB$^\circledR$ are employed,
and we choose the best simulation results.

\begin{table}
\scriptsize
\addtolength{\tabcolsep}{-1.5pt}
\caption{\label{tab:qutrit}
  The simulation result for the decomposition of a randomly generated qutrit channel in Eq.~(\ref{eq:3Cexample}).
  The table on the left contains the parameters for prior and posterior unitary operators,
  and the table on the right contains the parameters ($a,b,c,d,e,f$) for Kraus operators and the probability $p$,
  and also the three eigenvalues $\lambda$ for each generalized extreme channels.
}
\noindent
\subtable{
\begin{tabular}{|c||c|c|c|c|c|c|c|c|c|}
  \hline
  & \multicolumn{3}{|c|}{$\mathcal{C}^\text{g}_1$}
  & \multicolumn{3}{|c|}{$\mathcal{C}^\text{g}_2$} & \multicolumn{3}{|c|}{$\mathcal{C}^\text{g}_3$} \\ \hline
  & $R_1$ & $R_2$ & $R_3$ & $R_1$ & $R_2$ & $R_3$ & $R_1$ & $R_2$ & $R_3$ \\ \hline  \hline
  $\theta_1$ & 1.2344 & 0.2292 & 0.9562 & 0.6197 & 0.3668 & 1.1377 & 0.3082 & 1.1345 & 0.3574 \\ \hline
  $\theta_2$ & 1.2781 & 0.6352 & 0.1978 & 1.1258 & 0.4069 & 1.1456 &0.6092  & 0.5117 & 1.1794 \\ \hline
  $\theta_3$ & 0.6618 & 0.4768 & 0.5194 & 0.9545 & 0.0651 & 1.4608 & 1.4406 & 0.6749 & 0.3582 \\ \hline
  $\phi_1$ & 1.1865 & 4.0185 & 2.6995 & 4.0777 & 1.8266 & 3.9186 &4.9068  & 4.7498 & 2.1275 \\ \hline
  $\phi_2$ & 4.1535 & 3.3050 & 5.1831 & 1.8561 & 4.9335 & 2.3296 &5.4269  & 3.1036 & 2.5366 \\ \hline
  $\phi_3$ & 1.6894 & 5.0089 & 2.1618 & 3.6516 & 2.8210 & 4.3385 &2.6902  & 5.3635 & 5.1105 \\ \hline
  $\phi_4$ & 0.8490 & 2.1711 & 3.9187 & 4.9058 & 2.1526 & 5.4539 & 2.8977 & 4.5586 & 3.3091 \\ \hline
  $\phi_5$ & 4.7523 & 3.6288 & 1.2381 & 2.2728 & 3.1790 & 3.1468 & 0.6585 & 3.6124 & 2.2142 \\
  \hline
\end{tabular}
}
\subtable{
\begin{tabular}{|c||c|c|c|}
  \hline
  & $\mathcal{C}^\text{g}_1$ & $\mathcal{C}^\text{g}_2$ & $\mathcal{C}^\text{g}_3$ \\ \hline \hline
  $a$ & 2.1417 & 2.3442 & 2.0610 \\ \hline
  $b$ & 4.8284 & 1.8620 & 3.9621 \\ \hline
  $c$ & 2.3434 & 4.7272 & 1.6220 \\ \hline
  $d$ & 4.0164 & 2.2822 & 1.0321 \\ \hline
  $e$ & 2.7418 & 4.0726 & 2.5719 \\ \hline
  $f$ & 3.1900 & 4.8792 & 5.2118 \\ \hline
  $\lambda_1$ & 0.5667& 0.5088 & 0.5457 \\ \hline
  $\lambda_2$ & 0.8868& 1.0942 & 0.7287 \\ \hline
  $\lambda_3$ & 1.5465 & 1.3970 & 1.7256 \\ \hline
  $p$ & 0.2974 & 0.3676 & 0.3350 \\
  \hline
\end{tabular}
}
\end{table}

Next we present one simulation result,
which contains 92 parameters:
\begin{itemize}
  \item 30 for each generalized extreme channel:
  6 in the Kraus operators and 24 in the initial and final unitary operators;
  \item 2 for the probability $p_1$, $p_2$.
\end{itemize}
The result is summarized in Table~\ref{tab:qutrit}.
The approximate Choi-Jamio{\l}kowski state $\mathcal{C}'$ is
\begin{scriptsize}
\begin{align}\label{}\nonumber
  \mathcal{C}'= &
\left(\begin{matrix}
   0.3103 + 0.0000i &  0.1082 + 0.0119i &  0.0386 - 0.0089i &  0.0559 + 0.0007i & -0.0859 - 0.0676i\\
   0.1082 - 0.0119i &  0.2522 + 0.0000i & -0.0243 + 0.0256i &  0.0726 - 0.0333i & -0.0393 - 0.0777i \\
   0.0386 + 0.0089i & -0.0243 - 0.0256i &  0.2520 + 0.0000i &  0.0309 + 0.0603i &  0.0645 + 0.0313i \\
   0.0559 - 0.0007i &  0.0726 + 0.0333i &  0.0309 - 0.0603i &  0.2951 + 0.0000i & -0.0095 - 0.0290i \\
  -0.0859 + 0.0676i & -0.0393 + 0.0777i &  0.0645 - 0.0313i & -0.0095 + 0.0290i &  0.2677 + 0.0000i \\
   0.0207 - 0.0164i & -0.0922 - 0.0272i &  0.0765 - 0.1407i & -0.0133 - 0.0100i &  0.0871 - 0.0753i\\
  -0.0090 - 0.0044i & -0.0260 - 0.0903i & -0.1034 - 0.0120i & -0.0521 + 0.0552i & -0.0975 - 0.1628i \\
  -0.1058 - 0.1225i &  0.0797 + 0.0632i &  0.0461 - 0.0006i &  0.1708 + 0.0550i &  0.0246 + 0.0135i\\
   0.1126 - 0.0218i & -0.0676 - 0.0221i & -0.0414 - 0.0107i &  0.1136 - 0.0481i & -0.0505 + 0.0714i
\end{matrix}\right. \\
   &  \left.\begin{matrix}
 0.0207 + 0.0164i& -0.0090 + 0.0044i & -0.1058 + 0.1225i &  0.1126 + 0.0218i\\
 -0.0922 + 0.0272i&  -0.0260 + 0.0903i &  0.0797 - 0.0632i & -0.0676 + 0.0221i\\
 0.0765 + 0.1407i&  -0.1034 + 0.0120i &  0.0461 + 0.0006i & -0.0414 + 0.0107i\\
 -0.0133 + 0.0100i & -0.0521 - 0.0552i &  0.1708 - 0.0550i &  0.1136 + 0.0481i\\
  0.0871 + 0.0753i &-0.0975 + 0.1628i &  0.0246 - 0.0135i & -0.0505 - 0.0714i\\
  0.3731 + 0.0000i &0.0329 + 0.1523i & -0.1037 - 0.0034i &  0.0828 + 0.0638i\\
  0.0329 - 0.1523i &0.3946 + 0.0000i & -0.0987 + 0.0171i & -0.0253 - 0.0012i\\
 -0.1037 + 0.0034i &-0.0987 - 0.0171i &  0.4802 + 0.0000i & -0.0628 - 0.1009i \\
  0.0828 - 0.0638i & -0.0253 + 0.0012i & -0.0628 + 0.1009i &  0.3749 + 0.0000i
\end{matrix}\right),
\end{align}
\end{scriptsize}

\noindent with eigenvalues
0.0039,    0.0280,    0.0797,    0.1264,    0.2473,
0.4395,    0.5825,    0.6515,  and  0.8413.
The actual error (the trace distance between input Choi-Jamio{\l}kowski state~$\mathcal{C}$
and approximate Choi-Jamio{\l}kowski state $\mathcal{C}'$) is 0.046.
This means the probability to distinguish the true channel from the approximate one
is $\frac{1}{2}(1+0.046)$.
As $0.046\ll 1$, this indicates that the channel decomposition is good enough for accurate simulation.
We have performed simulations for about~$50$ randomly chosen channels,
and the errors are all in the order~$0.01$.

In addition, we have checked the block-matrix structure to verify our simulations.
That is, in the block-matrix form a generalized extreme channel takes the form
\begin{equation}\label{eq:choiqutrit}
\mathcal{C}^\text{g}=
\begin{pmatrix}
C_1 & \sqrt{C_1}U_{12}\sqrt{C_2} & \sqrt{C_1}U_{13}\sqrt{C_3} \\
\sqrt{C_2}U_{12}^\dagger\sqrt{C_1} & C_2 & \sqrt{C_2}U_{23}\sqrt{C_3} \\
\sqrt{C_3}U_{13}^\dagger\sqrt{C_1} & \sqrt{C_3}U_{23}^\dagger\sqrt{C_2} & C_3
\end{pmatrix},
\end{equation}
for positive matrices $C_1$, $C_2$, $C_3$,
and unitary operators $U_{12}$, $U_{23}$,
and $U_{13}=U_{12}U_{23}$.

\section{The Algorithm}
\label{sec:pseudocode}

We first explain our notation prior to presenting the pseudo-code.
We use $[\bullet]$ to denote a bit-string description of some object.
For instance, $[\mathcal{E}]$ is the description for a channel $\mathcal{E}$,
and we use $[C]$ to denote a quantum circuit.

We use $\textsc{ChaSim}$ to denote the main algorithm for channel simulation,
$\textsc{CJ}$ to denote the Choi-Jamio{\l}kowski state optimization decomposition,
and $\textsc{SK}(U,\epsilon)$ to denote the Solovay-Kitaev algorithm for the approximation
of a gate $U$ containing continuous variables, e.g., rotation angles,
by gates from a universal library within spectrum-norm distance $\epsilon$.
We denote the multi-use qudit Solovay-Kitaev algorithm by
$\textsc{SK}(U_1, U_2, \cdots, U_m, \epsilon)$ for the gates $U_1, U_2, \cdots, U_m$, respectively,
so that each of the gate is approximated with error input $\epsilon$.

\begin{algorithm}
\caption{Algorithm for qudit quantum channel $\mathcal{E}$ simulation}
\label{alg:sqd}
\begin{footnotesize}
\begin{algorithmic}
\Require \\
$[\mathcal{E}]$: bit-string description of the channel\\
$d$: the qudit dimension\\
$\epsilon$: the error tolerance
\Ensure \\
$[C]$: bit-string description of the circuit
\Function {ChaSim}{$[\mathcal{E}]$,~$d$, $\epsilon$}
\State  $[C]\gets \varnothing$.  \Comment{Initializes as the empty-string}
\State $U \gets \text{Haar-rand-}SU(d^3)$. \Comment{Generate random unitary operator}
\For {$i=0$ to $d^2-1$}
\State $K_i \gets \langle i| U |0\rangle $. \Comment{Generate Kraus operators}
\EndFor
\State $ [\mathcal{E}] \gets  \{K_i\} $. \Comment{Generate input channel}
\State  $\mathcal{C} \gets [\mathcal{E}]$.
\Comment{Convert channel to Choi-Jamio{\l}kowski state~$\mathcal{C}$}
\For {$\imath=1$ to~$d$}
\State $\vec{p} \gets \text{rand} [0,1 ]^{\otimes d}$. \Comment{Generate probability}
\State $W^{(\imath)}, V^{(\imath)} \gets \text{Haar-rand-}SU(d)$.
\Comment{Generate random unitary operators}
\For {$i=1$ to $d-1$}
\For {$j=0$ to $i-1$}
\State $\vec{\theta}^{(\imath)} \gets 2\pi \text{rand} [0,1 ]^{\otimes d^2-d}$.
\Comment{Generate rotation angles}
\State $G_{ij}^{(\imath)} (\theta_{ij}^{(\imath)}) \gets \cos \theta_{ij}^{(\imath)}
(|i\rangle\langle i|+|j\rangle\langle j|) + \sin \theta_{ij}^{(\imath)} (|j\rangle\langle i|-|i\rangle\langle j|) $.
\State $CG_{ij}^{(\imath)} (\theta_{ij}^{(\imath)}) \gets |i\rangle\langle i| \otimes G_{ij}^{(\imath)} (\theta_{ij}^{(\imath)})  $.
\EndFor
\EndFor
\For {$i=1$ to $d-1$}
\State $X_i \gets \sum_{\ell=0}^{d-1} |\ell\rangle\langle \ell+i|$.
\State $CX_i \gets X_i\otimes |i\rangle\langle i|$. \Comment{Controlled-$X_i$ gates}
\EndFor
\State $U^{(\imath)} \gets \prod_{i=1}^{d-1} CX_i \prod_{i=d-1}^{1}
\prod_{j=i-1}^{0} CG_{ij}^{(\imath)} (\theta_{ij}^{(\imath)})  CG_{ji}^{(\imath)} (\theta_{ji}^{(\imath)}) $.
\For {$i=0$ to $d-1$}
\State $F_i^{(\imath)} \gets \langle i|U^{(\imath)}|0\rangle$.
\State $K_i^{(\imath)}  \gets W^{(\imath)}F_i^{(\imath)}  V^{(\imath)}$.
\Comment{Kraus operators for each generalized extreme channel}
\EndFor
\State $\mathcal{C}^{(\imath)} \gets \{K_i^{(\imath)}\}$.
\EndFor
\State $\{\epsilon', \vec{p}', \vec{\theta}^{(\imath)} , W^{(\imath)},  V^{(\imath)}\} \gets$
$\textsc{CJ}(\mathcal{C}, \epsilon, \vec{p}, \{\mathcal{C}^{(\imath)}\})$.
\Comment{Choi-Jamio{\l}kowski state decomposition }
\If {$\epsilon'\leq \epsilon$}
\State \Return $U^{(\imath)}  \gets \vec{\theta}^{(\imath)} , W^{(\imath)},  V^{(\imath)}$.
\For {$\imath=1$ to~$d$}
\State $\tilde{W}^{(\imath)}, \tilde{V}^{(\imath)}, \tilde{G}_{ij}^{(\imath)} \gets$
$\textsc{SK}(W^{(\imath)}, V^{(\imath)}, G_{ij}^{(\imath)}, \epsilon)$.
\Comment{Solovay-Kitaev algorithm}
\State $[C^{(\imath)}] \gets \tilde{W}^{(\imath)}, \tilde{V}^{(\imath)}, \tilde{G}_{ij}^{(\imath)}$.
\Comment{Construct the generalized extreme channel circuit}
\EndFor
\State \Return $[C]\gets [C^{(1)}][C^{(2)}]\cdots[C^{(d)}][\vec{p}']$.
\Else
\State \Return false.
\EndIf
\EndFunction
\end{algorithmic}
\end{footnotesize}
\end{algorithm}

\clearpage

\section*{References}
\bibliography{ext}

\providecommand{\newblock}{}
\begin{thebibliography}{10}
\expandafter\ifx\csname url\endcsname\relax
  \def\url#1{{\tt #1}}\fi
\expandafter\ifx\csname urlprefix\endcsname\relax\def\urlprefix{URL }\fi
\providecommand{\eprint}[2][]{\url{#2}}

\bibitem{Fey82}
Feynman R~P 1982 {\em Int. J. Theor. Phys.\/} {\bf 21} 467--488

\bibitem{Llo96}
Lloyd S 1996 {\em Science\/} {\bf 273} 1073--1078

\bibitem{BACS07a}
Berry D~W, Ahokas G, Cleve R and Sanders B~C 2007 {\em Commun. Math. Phys.\/}
  {\bf 270} 359--371

\bibitem{BN09}
Buluta I and Nori F 2009 {\em Science\/} {\bf 326} 108--111

\bibitem{Sand13}
Sanders B~C 2013 {Efficient Algorithms for Universal Quantum Simulation} {\em
  {Reversible Computation}\/} ed Dueck G~W and Miller D~M (Springer) pp 1--10

\bibitem{WBHS11}
Wiebe N, Berry D~W, H{\o}yer P and Sanders B~C 2011 {\em J. Phys. A: Math.
  Theor.\/} {\bf 44} 445308

\bibitem{RWS12}
Raeisi S, Wiebe N and Sanders B~C 2012 {\em New J. Phys.\/} {\bf 14} 103017

\bibitem{VWC09}
Verstraete F, Wolf M~M and Cirac J~I 2009 {\em Nat. Phys.\/} {\bf 5} 633--636

\bibitem{MDPZ12}
M\"{u}ller M, Diehl S, Pupillo G and Zoller P 2012 {\em Adv. At. Mol. Opt.
  Phys.\/} {\bf 61} 1--80

\bibitem{TD00}
Terhal B~M and DiVincenzo D~P 2000 {\em Phys. Rev. A\/} {\bf 61} 022301

\bibitem{PW09}
Poulin D and Wocjan P 2009 {\em Phys. Rev. Lett.\/} {\bf 103} 220502

\bibitem{SAA+10}
Diehl S, Tomadin A, Micheli A, Fazio R and Zoller P 2010 {\em Phys. Rev.
  Lett.\/} {\bf 105}(1) 015702

\bibitem{SMN+13}
Schindler P, M\"{u}ller M, Nigg D, Barreiro J~T, Martinez E~A, Hennrich M, Monz
  T, Diehl S, Zoller P and Blatt R 2013 {\em Nat. Phys.\/} {\bf 9} 361--367

\bibitem{Chi09}
Childs A~M 2009 {\em Commun. Math. Phys.\/} {\bf 294} 581--603

\bibitem{HHL09}
Harrow A~W, Hassidim A and Lloyd S 2009 {\em Phys. Rev. Lett.\/} {\bf 103}(15)
  150502

\bibitem{Ber14}
Berry D~W 2014 {\em J. Phys. A: Math. Theor.\/} {\bf 47} 105301

\bibitem{LHN+11}
Lanyon B~P, Hempel C, Nigg D, M\"{u}ller M, Gerritsma R, Z\"{a}hringer F,
  Schindler P, Barreiro J~T~T, Rambach M, Kirchmair G, Hennrich M, Zoller P,
  Blatt R and Roos C~F 2011 {\em Science\/} {\bf 334} 57--61

\bibitem{BSK+12}
Britton J~W, Sawyer B~C, Keith A~C, Wang C~C~J, Freericks J~K, Uys H, Biercuk
  M~J and Bollinger J~J 2012 {\em Nature\/} {\bf 484} 489--492

\bibitem{BMS+11}
Barreiro J~T, M\"{u}ller M, Schindler P, Nigg D, Monz T, Chwalla M, Hennrich M,
  Roos C~F, Zoller P and Blatt R 2011 {\em Nature\/} {\bf 470} 486--491

\bibitem{LSA+07}
Lewenstein M, Sanpera A, Ahufinger V, Damski B, Sen~De A and Sen U 2007 {\em
  Adv. Phys.\/} {\bf 56} 243--379

\bibitem{WMBL11}
Weimer H, M\"{u}ller M, B\"{u}hler H~P and Lesanovsky I 2011 {\em Quant. Inf.
  Proc.\/} {\bf 10} 885--906

\bibitem{GSP14}
Paraoanu G~S 2014 {\em J. Low. Temp. Phys.\/} {\bf 175} 633--654

\bibitem{Sti55}
Stinespring W~F 1955 {\em Proc. Am. Math. Soc.\/} {\bf 6} 211--216

\bibitem{Cho75}
Choi M~D 1975 {\em Linear Algebra Appl.\/} {\bf 290} 285--290

\bibitem{Kra83}
Kraus K 1983 {\em States, Effects, and Operations: Fundamental Notions of
  Quantum Theory\/} ({\em Lecture Notes in Physics\/} vol 190) (Berlin:
  Springer-Verlag)

\bibitem{NC00}
Nielsen M~A and Chuang I~L 2000 {\em Quantum Computation and Quantum
  Information\/} (Cambridge U.K.: Cambridge University Press)

\bibitem{KBG+11}
Kliesch M, Barthel T, Gogolin C, Kastoryano M and Eisert J 2011 {\em Phys. Rev.
  Lett.\/} {\bf 107} 120501

\bibitem{WBOS13}
Wang D~S, Berry D~W, de~Oliveira M~C and Sanders B~C 2013 {\em Phys. Rev.
  Lett.\/} {\bf 111}(13) 130504

\bibitem{FPK+12}
Fisher K~A~G, Prevedel R, Kaltenbaek R and Resch K~J 2012 {\em New J. Phys.\/}
  {\bf 14} 033016

\bibitem{CGM+12}
Chiuri A, Greganti C, Mazzola L, Paternostro M and Mataloni P 2012 {\em Sci.
  Rep.\/} {\bf 2} 968

\bibitem{SPA+12}
Sarah M, Patrick R, Alexander E, Andrew J~K, Dimitris I~T and Al\'{a}n A~G 2012
  {\em New J. Phys.\/} {\bf 14} 105013

\bibitem{TB05}
Terhal B~M and Burkard G 2005 {\em Phys. Rev. A\/} {\bf 71}(1) 012336

\bibitem{MPG+13}
Magesan E, Puzzuoli D, Granade C~E and Cory D~G 2013 {\em Phys. Rev. A\/} {\bf
  87}(1) 012324

\bibitem{AKN97}
Aharonov D, Kitaev A and Nisan N 1997 Quantum circuits with mixed states {\em
  Proceedings of the Thirtieth Annual ACM Symposium on Theory of Computation
  (STOC)\/} (ACM) pp 20--30

\bibitem{RSW02}
Ruskai M~B, Szarek S and Werner E 2002 {\em Linear Algebra Appl.\/} {\bf 347}
  159--187

\bibitem{Rus07}
Ruskai M~B 2007 Open problems in quantum information theory quant-ph:0708.1902

\bibitem{BV04}
Boyd S~P and Vandenberghe L 2004 {\em Convex Optimization\/} (Cambridge U.K.:
  Cambridge University Press)

\bibitem{WC08}
Wolf M~M and Cirac J~I 2008 {\em Commun. Math. Phys.\/} {\bf 279} 147--168

\bibitem{BBC+95}
Barenco A, Bennett C~H, Cleve R, DiVincenzo D~P, Margolus N, Shor P, Sleator T,
  Smolin J~A and Weinfurter H 1995 {\em Phys. Rev. A\/} {\bf 52}(5) 3457--3467

\bibitem{MV06}
M\"{o}tt\"{o}nen M and Vartiainen J~J 2006 {\em Decompositions of {G}eneral
  {Q}uantum {G}ates\/} (New York: Nova) chap~7

\bibitem{DN06}
Dawson C~M and Nielsen M~A 2006 {\em Quant. Inf. Comput.\/} {\bf 6} 81--95

\bibitem{KMM13b}
Kliuchnikov V, Maslov D and Mosca M 2013 {\em Phys. Rev. Lett.\/} {\bf 110}
  190502

\bibitem{Kli13}
Kliuchnikov V 2013 Synthesis of unitaries with {C}lifford+{T} circuits
  quant-ph:1306.3200

\bibitem{BOB05}
Bullock S~S, O'Leary D~P and Brennen G~K 2005 {\em Phys. Rev. Lett.\/} {\bf
  94}(23) 230502

\bibitem{Jam72}
Jamio{\l}kowski A 1972 {\em Rep. Math. Phys.\/} {\bf 3} 275

\bibitem{SBM06}
Shende V~V, Bullock S~S and Markov I~L 2006 {\em IEEE Trans. Comput.-Aided
  Des.\/} {\bf 25} 1000--1010

\bibitem{Wat13}
Watrous J 2013 Theory of quantum information {L}ecture {N}otes
  \urlprefix\url{https://cs.uwaterloo.ca/~watrous/quant-info/}

\bibitem{YS06}
Yuan Y~X and Sun W 2006 {\em Optimization Theory and Methods: Nonlinear
  Programming\/} ({\em Springer Optimization and Its Applications\/} vol~1)
  (New York: Springer)

\bibitem{Toth08}
T\'{o}th G 2008 {\em Comput. Phys. Commun.\/} {\bf 179} 430--437

\bibitem{HJ91}
Horn R~A and Johnson C~R 1991 {\em Topics in Matrix Analysis\/} (Cambridge,
  U.K.: Cambridge University Press)

\bibitem{KSV02}
Kitaev A, Shen A~H and Vyalyi M~N 2002 {\em Classical and Quantum
  Computation\/} ({\em Graduate Studies in Mathematics\/} vol~47) (Providence:
  American Mathematical Society)

\bibitem{BZ06}
Bengtsson I and \.{Z}yczkowski K 2006 {\em Geometry of Quantum States\/}
  (Cambridge U.K.: Cambridge University Press)

\bibitem{Mis11}
Miszczak J~A 2011 {\em Int. J. Mod. Phys. C\/} {\bf 22} 897--918

\bibitem{NV12}
Vitanov N~V 2012 {\em Phys. Rev. A\/} {\bf 85}(3) 032331

\end{thebibliography}

\end{document}